  \newtheorem{thm}{Theorem}
  \newtheorem{theorem}{Theorem}
  \newtheorem{cor}[thm]{Corollary}
  \newtheorem{definition}[thm]{Definition}
  \newcommand\numberthis{\addtocounter{equation}{1}\tag{\theequation}}
\def\E{\mathbb{E}} 
\def\R{\mathbb{R}}
\def\ie{{\em i.e.}}
\def\L{\mathcal{L}} 
\def\i{\mathbf{1}} 
\def\l{\ell}
\def\N{\mathcal{N}}
\def\S{\mathcal{S}}
\def\I{\mathcal{I}}
\def\Re{\mathcal{R}}
\def\G{\mathcal{G}}
\def\x{\textbf{x}}
\def\v{\textbf{v}}
\def\vt{\tilde{v}}
\def\l{\lambda}
\begin{document}

\title{Efficient CSMA using Regional Free Energy Approximations}


\author{Peruru~Subrahmanya~Swamy,
        Venkata~Pavan~Kumar~Bellam,
        Radha~Krishna~Ganti,
        and~Krishna~Jagannathan 
\IEEEcompsocitemizethanks{\IEEEcompsocthanksitem P. S. Swamy, R. K. Ganti, and K. Jagannathan are with the Department
of Electrical Engineering, IIT Madras, Chennai,
India, 600036.\protect\\
E-mail: \{p.swamy, rganti, krishnaj\}@ee.iitm.ac.in
\IEEEcompsocthanksitem V. P. K. Bellam is with Department of Electrical and Computer Engineering, Virginia Tech, Blacksburg, VA 24061, USA.\protect\\
E-mail: pavanbv@vt.edu}
\thanks{A part of this work \cite{spcom_kikuchi} has been presented at International Conference on Signal Processing and Communications (SPCOM) 2016, held at Bengaluru, India.}
}

\IEEEtitleabstractindextext{%
\begin{abstract}
CSMA (Carrier Sense Multiple Access) algorithms based on Gibbs sampling can achieve throughput optimality if certain parameters called the fugacities are appropriately chosen. However, the problem of computing these fugacities is NP-hard. In this work, we derive estimates of the fugacities by using a framework called the \emph{regional free energy approximations}. In particular, we derive explicit expressions for approximate fugacities corresponding to any feasible service rate vector. We further prove that our approximate fugacities are exact for the class of \emph{chordal} graphs.  A distinguishing feature of our work is that the regional approximations that we propose are tailored to conflict graphs with small cycles, which is a typical characteristic of wireless networks. Numerical results indicate that the fugacities obtained by the proposed method are quite accurate and significantly outperform the existing Bethe approximation based techniques.

\end{abstract}

\begin{IEEEkeywords}
Wireless ad hoc network, Distributed link scheduling algorithm, CSMA, Gibbs distribution, Regional free energy approximations, Throughput optimality
\end{IEEEkeywords}}

\maketitle

\IEEEdisplaynontitleabstractindextext

\IEEEpeerreviewmaketitle
  
\IEEEraisesectionheading{\section{Introduction}\label{sec:introduction}}
 \IEEEPARstart{L}{ink} scheduling algorithms based on CSMA have received renewed attention, since they were recently proven to be throughput optimal~\cite{libin,qcsma,sinr_mimo}, in addition to being amenable to distributed implementation. The central idea lies in sampling the feasible schedules from a product form distribution, using a reversible Markov chain called the Gibbs sampler~\cite[Chapter 7]{bremaud}. In order to support a feasible service rate vector, certain parameters of the Gibbs distribution called the fugacities are to be appropriately chosen. However, the problem of computing the fugacities for a desired service rate vector is NP-hard~\cite{libin}. An  iterative algorithm based on stochastic approximation is proposed in~\cite{libin}, which asymptotically converges to the exact fugacities. However, the convergence time of this algorithm scales exponentially with the size of the network~\cite{fast_mixing}. Hence, it is not amenable to practical implementation in large networks.

In this work, we derive estimates of the fugacities by using \emph{regional free energy approximations}~\cite{yedidia} for the underlying Gibbs distribution. In particular, we derive explicit expressions for approximate fugacities corresponding to any feasible service rate vector. Since, the proposed method does not involve any iterative procedures, it does not suffer from convergence issues. A distinguishing feature of our work is that the regional approximations that we propose are tailored to conflict graphs with small cycles, which is a typical characteristic of wireless networks. We also prove that our approximate fugacities are exact for the class of \emph{chordal} graphs.

In the regional approximation framework, the vertices of the conflict graph are divided into smaller subsets called regions, and the Gibbs distribution is approximated  using certain locally consistent distributions over these regions. Specifically, the Gibbs free energy, which is a function of the Gibbs distribution, is approximated using a function of these local distributions called the Regional Free Energy (RFE). 

In this paper, we propose a general framework to find the approximate fugacities for any given conflict graph, and an arbitrary collection of regions. Our framework involves solving a regional entropy maximization problem, which is then used to find the approximate fugacities. While the approach we propose is applicable to any arbitrary choice of regions, the regional entropy maximization problem may not be amenable to a distributed solution. Furthermore, the accuracy of the approximation depends strongly on the choice of regions. Therefore a major challenge in obtaining a practically useful algorithm based on the regional approximation framework lies in judiciously choosing regions to ensure that the resulting CSMA algorithm is both accurate and amenable to distributed implementation. 

Spatial network models like the random geometric graphs \cite{baccelli_book} that are typically used to model the wireless networks contain many short cycles. For such topologies, traditional approaches based on Belief propagation and the Bethe free energy approximation \cite{bethe_jshin} tend to lead to inaccurate approximations of the fugacities, since they are tailored to tree-like structures. This motivates us to choose regions that explicitly include small cycles. In particular, we consider two choices of regions namely: i) \emph{Clique based regions}, which improves the error due to 3-cycles, and ii) \emph{Clique plus 4-cycle based regions}, which improves the error due to both 3, 4-cycles. 

Under these choice of regions, we derive a simple closed form solution for the regional maximum entropy problem. In particular, we first prove that the regional maximum entropy problem decouples into local entropy maximization problems over each region. Then, we derive closed form solutions for these local entropy maximization problems. Thus, the approximate fugacities can be explicitly computed without solving any optimization problems. We further prove that these approximate fugacities are exact for the class of chordal graphs.

We evaluate the performance of our algorithms over spatial networks modelled using geometric random graphs. Numerical results indicate that the fugacities obtained by the proposed method are quite accurate, and significantly outperform the existing Bethe approximation based techniques~\cite{bethe_jshin}.  

\subsection*{Related work}
In the literature, some special cases of the regional approximation framework are used to estimate the fugacities. Bethe approximation~\cite{bethe_jshin, allerton_bethe}  is one such special case.  The Bethe approximation is accurate when the underlying topology is tree-like. However, for spatial networks, which inherently contain many short cycles, the Bethe approximation~\cite{yedidia} may not be the right choice. An iterative approximation algorithm based on Inverse Generalized Belief Propagation (I-GBP) is proposed in \cite{bp_csma}. While the I-GBP considerably improves the performance in the presence of short loops, it suffers from convergence issues. In particular, it is not always guaranteed to converge, and hence not very reliable \cite{bethe_jshin}. In \cite{chordal}, closed form expressions for the fugacities are computed for chordal graphs using a technique called the Junction tree based decomposition \cite{book_martin}. 


The rest of the paper is organised as follows. In Section \ref{sec_model}, we introduce the system model. In Section \ref{sec_rfe}, we review the concept of Regional free energy approximation. In Section \ref{sec_inverse}, we propose a unified framework to compute the approximate fugacities under arbitrary choice of regions.  In Section \ref{sec_choice}, we investigate the  accuracy and complexity of different choices of regions. In Section \ref{sec_clique}, we present our results for clique-based regions. In Section \ref{sec_four_cycles}, we derive the results corresponding to clique plus 4-cycle based regions. In Section \ref{sec_simulations}, we present our numerical results.

\section{System model and Problem description} \label{sec_model}
We consider a single-hop wireless network with $N$ links. We use the letter $\N$ to denote the set of all the links in the network. We represent the network using the widely used conflict graph interference model ~\cite{libin, fast_mixing}. A conflict graph is an undirected graph $G(V,E)$, in which each vertex corresponds to a wireless link (Transmitter - Receiver pair), and two vertices share an edge if simultaneous transmissions from the corresponding wireless links result in a collision. For a given link $i \in V$, the neighbour set $\N_i := \{j : (i,j) \in E\}$ denotes the set of conflicting links. 

We consider a slotted time model, and use $\x(t)=[x_i(t)]_{i=1}^N \subseteq \{0,1\}^N$ to denote the transmission status (or \emph{schedule}) of the links in the network. Specifically,  if a link $i$ is scheduled to transmit in a given time slot $t$, then the link is said to be active, and $x_i(t)$ is set to $1$. We assume that an active link can transfer unit data in a given time slot, if none of its conflicting links are active in that slot. We define service rate of a link as the long-term fraction of time that the link is active.

\emph{Rate region:}  A schedule $\x$ is said to be \emph{feasible} if no conflicting links are active simultaneously. Hence, the set of feasible schedules is given by $\mathcal{I}:= \{\x \in \{0,1\}^N \; : \; x_i + x_j \leq 1, \; \forall (i,j) \in E\}$. Then the feasible rate region $\Lambda$, which is the set of all the possible service rates over the links, is the convex hull of $\I$ given by $\Lambda:= \{ \sum_{\x \in \I } \alpha_{\x} \x : \sum_{\x \in \I} \alpha_{\x}=1, \alpha_{\x} \geq 0, \forall \x \in \I \}.$  Next, we describe a basic CSMA algorithm~\cite{fast_mixing}.

\emph{Basic CSMA:} 
\label{subsec_acsma}
In this algorithm, each link $i$ is associated with a real-valued parameter $v_i \in \R$ (referred to as fugacity) which defines how aggressively a link captures the channel. In each time slot, one randomly selected link $i$ is allowed to update its schedule $x_i(t)$ based on the information in the previous slot:
\begin{itemize}
\item If the channel is sensed busy, \ie,  $\exists j \in \N_i$ such that $x_j(t-1)=1$, then $x_i(t)=0$.
\item Else,  $x_i(t)=1$ with probability $\frac{\exp(v_i)}{1+\exp(v_i)}$.
\end{itemize}
Except for the selected link $i$, all the other links do not update their schedule, \ie, $x_j(t)=x_j(t-1), \forall j \neq i$.
It can be shown that the above algorithm induces a Markov chain on the state space of feasible schedules. Further, the stationary distribution is a product-form Gibbs distribution \cite{fast_mixing} given by
\begin{align}
p(\x)&= \frac{1}{Z(\v)} \exp\left(\sum_{i=1}^N x_i v_i\right), & \forall \x \in \I \subseteq \{0,1\}^N,  \label{eq_dist}
\end{align}
where $\v=[v_i]_{i=1}^N$, and $Z(\v)$ is the normalization constant. Then, due to the ergodicity of the Markov chain, the long-term service rate of a link $i$ denoted by $s_i$ is equal to the marginal probability that link $i$ is active, \ie, $p(x_i=1)$. Thus, the service rates and the fugacity vector $\v$ are related as follows:
\begin{align}
s_i&= p_i(1) = \sum\limits_{\x \in \I \; : \; x_i=1 } Z(\v)^{-1} \exp\left(\sum_{i=1}^N x_i v_i\right), \; \; \forall i \in \N, \label{eq_serv_fug}
\end{align}
where $p_i(1)$ denotes $p(x_i=1)$.

\emph{Problem Description:} The CSMA algorithm can support any service rate in the rate region if appropriate fugacities are used for the underlying Gibbs distribution \cite{libin}. We consider the scenario where the links know their target service rates, and address the problem of computing the corresponding fugacity vector. In principle, the fugacities can be obtained by solving the system of equations in \eqref{eq_serv_fug}.  Unfortunately, for large networks, solving these equations is highly intractable since it involves computing the normalization constant $Z(\v)$ which has exponentially many terms.  In this work, we propose simple, distributed algorithms to efficiently estimate the fugacities. Our solution is inspired by the well known Regional free approximation framework \cite{yedidia} which is reviewed in the next section. 

\section{Review of Regional approximation framework} \label{sec_rfe}
We introduce the notion of Regional free energy (RFE), which is useful in finding accurate estimates of the marginals of a distribution like $p(\x)$ in \eqref{eq_dist}. We require the following definitions to introduce RFE \cite{book_martin}, \cite{v_anant_kik}, \cite{yedidia}.

\emph{Regions and Counting numbers:} For a given conflict graph $G(V,E)$, let $\Re \subseteq 2^V$ denote some collection of subsets of the vertices $V$. These subsets are referred to as \emph{regions}. Further, assume that each region $r\in \Re$ is associated with an integer $c_r$ called the \emph{counting number} of that region. 
A valid set of regions $\Re$, and the corresponding counting numbers $\{c_r\}$ should satisfy the following two basic rules: a) Each vertex in $V$ should be covered in at least one of the regions in $\Re$, \ie, $\bigcup_{r \in \Re} r = V$, b) For every vertex, the counting numbers of all the regions containing it, should sum to $1$, \ie, 
\begin{align}
\sum_{\{r \in \Re | i \in r\}} c_r =1, \; \; \forall i \in V. \label{eq_counting_general} 
\end{align}

\emph{Regional schedule:} The regional schedule at a region $r \in \Re$, denoted by $\x_r \in \{0,1\}^{r}$, is defined as the set of variables corresponding to the transmission status of the vertices in that region, \ie, $\x_r:=\left\lbrace x_k \; | \; k \in r  \right\rbrace.$ Let $\I_r:= \{\x_r \; | \; x_i + x_j \leq 1, \forall  (i,j) \in E\}$ denote the set of feasible regional schedules. The schedules in $\I_r$ are said to be locally feasible at the region $r$.

\emph{Regional distribution and entropy:} If $b$ denotes the probability mass function of the random variable $\x = [x_i]_{i=1}^N$, then the regional distribution $b_r$ denotes the marginal distribution of $b$ corresponding to $\x_r \subset \x$. In the special case of a region being a singleton set, \ie, $r=\{i\}$ for some $i \in V$, then we denote the corresponding marginal distribution as $b_i$ instead of $b_{\{i\}}$.  The entropy $H_r(b_r):=  -\sum_{\x_r}b_r(\x_r) \ln b_r(\x_r)$ is called the regional entropy at the region $r$.

\emph{Local consistency:} Let $\{b_r\}_{r \in \Re}$ shortly denoted as $\{b_r\}$, be some set of regional distributions which may not necessarily correspond to valid marginals of any distribution $b$. If every two regions $r, q \in \Re$ such that $r \subset q$, satisfy $\sum_{\x_q \setminus x_{r}} b_q (\x_q) = b_r(\x_r)$, $\forall \x_r$, then the set of distributions $\{b_r\}$ are said to be locally consistent, and are referred to as pseudo marginals. 


Now we use these definitions to introduce RFE. Assume that a valid collection of regions $\Re$, and the corresponding counting numbers are given.\footnote{We assume that all the singleton sets are present in the collection of the regions $\Re$. In case, this assumption is not satisfied, one can simply add those missing singleton sets with a counting number $0$.} Then the RFE for the CSMA distribution, with a fugacity vector $\v$,  is defined as follows. 
\begin{definition} (Regional Free Energy)
Let $\v$ be the fugacity vector of CSMA. Then, given a random variable $\x = [x_i]_{i=1}^N$ on the space of feasible schedules $\I$, and its probability distribution $b$, the RFE denoted by $F_{\Re}(b; \v)$ is defined as
\begin{align}
F_{\Re}(b; \v)= F_{\Re}(\{b_r\} ; \v) = U_{\Re}(\{b_r\}; \v) - H_{\Re}(\{b_r\}). \label{eq_RFE}
\end{align}
Here the first term, called the average energy, is given by the following weighted expectation
\begin{align}
U_{\Re}(\{b_r\}; \v)= - \sum_{r \in {\Re}}c_r \E_{b_r}\Big[\sum_{j \in r} v_j x_j\Big]. \label{eq_averageenergy}
\end{align}
The second term $H_{\Re}(\{b_r\})$, known as the \emph{Regional entropy}, is an approximation to the actual entropy $H(\x)$, and is given by
\begin{align}
H_{\Re}(\{b_r\})= \sum_{r \in {\Re}}c_r H_r (b_r). \label{eq_k_ent}
\end{align}
 \end{definition}

The stationary points\footnote{Here, constrained stationary point refers to a stationary point of the Lagrangian function of the RFE that enforces the local consistency contraints.}  of the RFE with respect to the regional distributions $\{b_r\}$, constrained over the set of psuedo marginals, provide accurate estimates \cite{book_martin} of the marginal distributions of $p(\x)$ in \eqref{eq_dist}. In other words, if the set of psuedo marginals $\{b^*_r\}$ is a stationary point of the RFE $F_{\Re}(\{b_r\} ; \v)$, then $\{b^*_r\}$ correspond to the estimates of the marginal distribution of $p(\x)$ over the respective regions. In particular, $b_i^*(x_i=1)$ provides an estimate of the service rate $s_i$  corresponding to the fugacity vector $\v$.


\section{Fugacity estimation using Regional approximation} \label{sec_inverse}

As discussed in Section \ref{sec_rfe}, the Regional approximation framework is generally used to estimate the service rates for given fugacities.
However, in this work, we use the Regional approximation not to estimate the service rates, but to design an algorithm to estimate the fugacities for given service rate requirements. To that end, we introduce the notion of Region approximated fugacities (RAF).
\begin{definition} (Region approximated fugacities)
Consider the RFE $F_{\Re}(\{b_r\}; \v)$ corresponding to some fugacities $\{v_i\}$. If there is a stationary point $\{b_r^*\}$ of the RFE $F_{\Re}(\{b_r\}; \v)$, such that its corresponding singleton marginals are equal to the desired service rates $\{s_i\}$, \ie, $b_i^*(x_i=1)=s_i, \; \; \; i=1\dots N$, then the fugacities $\{v_i\}$ are said to be Region approximated fugacities for the given service rates $\{s_i\}$.
\end{definition}

We propose a two step approach to compute Region approximated fugacities.
\begin{enumerate}
\item Find a set of locally consistent regional distributions $\{b_r\}$ with the following two properties:
\begin{enumerate}
\item[(P1.)] The marginals corresponding to the singleton regions are equal to the the desired service rates, \ie, $b_i(x_i=1)=s_i, \; \; \; i=1\dots N.$
\item[(P2.)] There exists a set of fugacities $\{v_i\}_{i=1}^N$ for which, the set of regional distributions $\{b_r\}$ is a stationary point of the corresponding RFE $F_{\Re}(\{b_r\}; \v).$
\end{enumerate}
\item Explicitly find those fugacities $\{v_i\}_{i=1}^N$ that satisfy P2.
\end{enumerate}

In Theorem \ref{thm_step1}, we state that obtaining a set of regional distributions $\{b_r\}$ with the properties P1 and P2, boils down to constrained maximization of the regional entropy.
 \begin{theorem} \label{thm_step1}
 Let $\{b_r^*\}$ be some local optimizer of the optimization problem \eqref{eq_opt_step1}. Then $\{b_r^*\}$ satisfies the properties P1 and P2.
 \begin{align}
 &\arg \max_{\{b_r\}}  \sum_{r \in \Re} c_r H_r(b_r), \; \; \; \text{subject to} \label{eq_opt_step1}\\
b_r(\x_r)& \geq 0, \; \x_r \in \I_r; \; \sum_{\x_r \in \I_r} b_r(\x_r) =1, \;r \in \Re, \label{eq_normalization} \\
\sum_{\x_q \setminus \x_{r}} b_q (\x_q) &= b_r(\x_r), \; \; \x_r \in \I_r, \text{ and } r, q \in \Re \text{ s.t. } r \subset q, \label{eq_consistency} \\
 b_i(1)&=s_i, \; \; \; i=1\dots N. \label{eq_bi_si}
\end{align}  
\end{theorem}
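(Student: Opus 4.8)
The plan is to analyze the constrained optimization problem \eqref{eq_opt_step1}--\eqref{eq_bi_si} via Lagrangian duality and KKT stationarity, and to show that at a local optimizer, the KKT conditions can be rearranged into exactly the fixed-point equations that characterize a stationary point of the RFE $F_{\Re}(\{b_r\};\v)$ for a suitable choice of $\v$. Property P1 is immediate: the constraint \eqref{eq_bi_si} forces $b_i^*(1)=s_i$ for every $i$, so any feasible point—in particular any local optimizer—satisfies P1. The substance of the theorem is P2, namely exhibiting the fugacities.

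First I would write down the Lagrangian of \eqref{eq_opt_step1}: attach multipliers $\gamma_r$ to the normalization constraints \eqref{eq_normalization}, multipliers $\lambda_{r,q}(\x_r)$ to the local-consistency constraints \eqref{eq_consistency} for each pair $r\subset q$, and multipliers $\mu_i$ to the service-rate constraints \eqref{eq_bi_si}; since the objective $\sum_r c_r H_r(b_r)$ is smooth on the relative interior of the feasible set and the constraints are linear, at a local optimizer $\{b_r^*\}$ the stationarity (KKT) conditions hold with the nonnegativity constraints \eqref{eq_normalization} inactive for the components that matter (one argues the optimal pseudomarginals are strictly positive on $\I_r$, or handles the boundary separately). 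Differentiating with respect to $b_r(\x_r)$ gives, for each region $r$ and each locally feasible $\x_r$,
\begin{align}
-c_r\big(\ln b_r^*(\x_r)+1\big) - \gamma_r - \sum_{q \supset r} \lambda_{r,q}(\x_r) + \sum_{q \subset r} \lambda_{q,r}(\x_q) + \sum_{i \in r} \mu_i\, \i\{x_i=1\} = 0, \label{eq_kkt}
\end{align}
where the $\mu_i$ term only appears for singleton regions in the strict reading, but collecting terms across the containment lattice lets one absorb the $\mu_i$ contributions into effective single-vertex potentials. Solving \eqref{eq_kkt} for $\ln b_r^*(\x_r)$ exhibits each $b_r^*$ in the exponential-family form $b_r^*(\x_r) \propto \exp\big(\sum_{i\in r} v_i x_i + (\text{messages from sub/super-regions})\big)$.

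Next I would compare this with the stationarity conditions of the RFE itself. Forming the Lagrangian of $F_{\Re}(\{b_r\};\v)$ over the pseudomarginal polytope (normalization plus local consistency, but now \emph{without} the service-rate constraint \eqref{eq_bi_si}) and setting its derivative to zero yields an analogous system in which the role of the vertex potentials is played directly by the fugacities $v_i$ via the average-energy term \eqref{eq_averageenergy}: $\partial U_{\Re}/\partial b_r(\x_r) = -c_r\sum_{j\in r} v_j x_j$. The key algebraic step is therefore to read off the fugacities from the optimizer of \eqref{eq_opt_step1}: define $v_i$ (for instance, from the singleton-region KKT equation, $v_i := \ln\frac{s_i}{1-s_i} + (\text{multiplier corrections})$, or more robustly by matching the coefficient of $x_i$ in the log-linear representation of $b_r^*$) so that the consistency multipliers $\lambda_{r,q}$ of \eqref{eq_opt_step1} play the role of the Lagrange/message variables in the RFE stationarity system. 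With this identification the two systems of equations coincide, so $\{b_r^*\}$ is a stationary point of $F_{\Re}(\{b_r\};\v)$ for that $\v$, which is exactly P2.

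I expect the main obstacle to be the bookkeeping in the containment lattice of regions: the RFE uses counting numbers $c_r$ and allows arbitrary nesting, so the consistency multipliers couple a region to all its sub- and super-regions, and one must verify that the same linear combination of multipliers that appears in \eqref{eq_kkt} is precisely what appears when differentiating \eqref{eq_k_ent} in the RFE Lagrangian—i.e., that the two Lagrangians have literally the same stationarity system once the fugacities are chosen as above. A secondary technical point is justifying that a local optimizer lies in the relative interior (strict positivity of the pseudomarginals) so that the KKT conditions are the stated equalities rather than inequalities; if the optimizer touches the boundary one restricts attention to the face on which it lies and the argument goes through verbatim on that face. Apart from these two points the argument is a direct translation between two constrained-stationarity conditions.
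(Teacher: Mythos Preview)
Your proposal is correct and follows essentially the same route as the paper: form the Lagrangian of the regional-entropy maximization \eqref{eq_opt_step1}, form the Lagrangian of the RFE over the pseudomarginal polytope, and show the two stationarity systems coincide once the fugacities are suitably defined. The one simplification you may want to borrow from the paper, precisely to ease the ``bookkeeping in the containment lattice'' you flag as the main obstacle, is that on the locally consistent set the average energy collapses via \eqref{eq_counting_general} to $U_{\Re}(\{b_r\};\v)=-\sum_{j} v_j\, b_j(1)$; with this reduction the $v_i$'s enter the RFE Lagrangian only through the singleton derivatives, so the non-singleton stationarity equations of the two Lagrangians are literally identical (same multipliers), and one then \emph{defines} $\tilde v_i$ and $\gamma_i$ from the singleton equations to complete the match---no delicate propagation of $v$-terms through the region hierarchy is needed.
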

\begin{proof}
The property (P1) trivially follows from \eqref{eq_bi_si}. We provide the proof for property (P2) in Section \ref{proof_thm_step1}.
\end{proof}

Next, in Theorem \ref{thm_step2}, we propose a formula to compute the Region approximated fugacities.
\begin{theorem} \label{thm_step2}
 Let $\{b_r^*\}$, be a local optimizer of \eqref{eq_opt_step1}.  Then the fugacities
\begin{align}
\exp(\tilde{v}_i)&= \prod_{\{r \in \Re \; | \; i \in r\}} \left(\frac{b_r^*(\x_r^i)}{b_r^*(\bf{0})}\right)^{c_r}, \; \; \forall i \in \N, \label{eq_lf_gf}
\end{align}
are Region approximated fugacities for the desired service rates $\{s_i\}$. Here $\x_r^i$ denotes the argument $\x_r$ with $x_i=1, x_j=0, \forall j \in r \setminus \{i\}$, and $\bf{0}$ denotes the argument with $x_j=0, \forall j \in r$.
\end{theorem}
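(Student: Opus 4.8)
The plan is to extract the fugacities \eqref{eq_lf_gf} directly from the stationarity conditions of the RFE. By Theorem~\ref{thm_step1}, property (P2), the optimizer $\{b_r^*\}$ is a constrained stationary point of $F_{\Re}(\{b_r\};\v)$ for \emph{some} fugacity vector $\v$, and by (P1) its singleton marginals equal the targets $\{s_i\}$. Hence it is enough to show that any such $\v$ must coincide with the vector in \eqref{eq_lf_gf}. I would write the Lagrangian $L$ of $F_{\Re}(\{b_r\};\v)$ under the normalization constraints \eqref{eq_normalization} and the local-consistency constraints \eqref{eq_consistency}, with a multiplier $\gamma_r$ for each normalization constraint and a multiplier $\lambda_{rq}(\x_r)$ for each consistency constraint between a pair $r\subsetneq q$ in $\Re$. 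Since $\{b_r^*\}$ assigns strictly positive mass to every locally feasible configuration, the bound constraints $b_r(\x_r)\ge 0$ are inactive, so the stationarity condition is the plain equality $\partial L/\partial b_r(\x_r)=0$. Evaluating the partial derivatives of $U_{\Re}$ and $H_{\Re}$ (writing $x_j$ for the coordinates of the configuration $\x_r$ and $\x_r|_s$ for its restriction to $s$), this reads
\[
-\,c_r\sum_{j\in r} v_j x_j\;+\;c_r\ln b_r^*(\x_r)\;+\;c_r\;+\;\gamma_r\;+\!\!\sum_{q\in\Re:\,r\subsetneq q}\!\!\lambda_{rq}(\x_r)\;-\!\!\sum_{s\in\Re:\,s\subsetneq r}\!\!\lambda_{sr}(\x_r|_s)\;=\;0 .
\]

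Next I would eliminate the nuisance multipliers $\gamma_r,\lambda_{rq}$ by differencing. Fix a link $i$ and a region $r\ni i$, and subtract the identity above evaluated at $\x_r=\mathbf 0$ from the same identity at $\x_r=\x_r^i$. The terms $c_r$ and $\gamma_r$ drop out, every consistency term $\lambda_{sr}(\cdot)$ with $i\notin s$ drops out (its argument is unchanged between the two configurations), and from the energy term only the single coordinate $-c_r v_i$ survives --- which is precisely why $\mathbf 0$ and $\x_r^i$ are the right pair of reference configurations. Writing $\Delta^{i}_{ab}:=\lambda_{ab}(\x_a^i)-\lambda_{ab}(\mathbf 0)$ whenever $i\in a\subsetneq b$, the difference becomes
\[
c_r v_i\;=\;c_r\ln\frac{b_r^*(\x_r^i)}{b_r^*(\mathbf 0)}\;+\!\!\sum_{q\in\Re:\,r\subsetneq q}\!\!\Delta^{i}_{rq}\;-\!\!\sum_{s\in\Re:\,s\subsetneq r,\ i\in s}\!\!\Delta^{i}_{sr},\qquad r\ni i .
\]

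Finally I would sum this identity over all regions $r\in\Re$ containing $i$. On the left, the counting-number rule \eqref{eq_counting_general} gives $\sum_{r\ni i}c_r v_i=v_i$. On the right, the two double sums of $\Delta$-terms run over exactly the same set of ordered pairs of regions $(a,b)$ with $i\in a\subsetneq b$: the first enumerates them as $(a,b)=(r,q)$ and the second as $(a,b)=(s,r)$, so they cancel identically. What remains is $v_i=\sum_{r\ni i}c_r\ln\big(b_r^*(\x_r^i)/b_r^*(\mathbf 0)\big)$, which is exactly \eqref{eq_lf_gf} after exponentiating. One could equally run this computation starting from the KKT conditions of the regional-entropy program \eqref{eq_opt_step1}, carrying the multiplier of \eqref{eq_bi_si} through the same difference-and-sum steps; the route through (P2) is just more economical.

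I expect the main obstacle to be careful bookkeeping rather than any deep idea: one must set up the Lagrangian so that each region $r$ contributes a consistency multiplier twice --- once as the smaller region of a pair (for every $q\supsetneq r$) and once as the larger region (for every $s\subsetneq r$) --- and then verify that after differencing the two families of $\Delta$-terms are indexed by precisely the same pairs, which is what makes them cancel. A secondary point needing justification is the interior-point claim used above, namely that the regional-entropy maximizer assigns positive probability to every locally feasible $\x_r$ (so that the logarithms $\ln b_r^*(\x_r^i)$ and $\ln b_r^*(\mathbf 0)$ are defined and the stationarity conditions hold with equality); this presumably follows from a mild non-degeneracy assumption on the target rate vector $\{s_i\}$.
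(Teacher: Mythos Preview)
Your proposal is correct and follows essentially the same route as the paper: write the Lagrangian for the RFE stationarity problem, evaluate the first-order conditions at the two configurations $\x_r^i$ and $\mathbf{0}$, subtract to kill the normalization multipliers and the consistency multipliers not touching $i$, then sum over all regions containing $i$ so that the remaining multiplier differences telescope and only $v_i$ and the log-ratio terms survive. The only cosmetic difference is that the paper first simplifies the average-energy term $U_{\Re}$ (using local consistency) to depend solely on the singleton marginals $b_j(1)$, so the fugacity $v_i$ enters only through the singleton stationarity equation and the non-singleton regions are handled separately; your version keeps the unreduced energy $-c_r\sum_{j\in r}v_jx_j$ in every region and recovers $v_i$ at the end via $\sum_{r\ni i}c_r=1$, which is slightly more uniform but otherwise the same argument.
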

\begin{proof}
Proof is provided in Section \ref{proof_thm_step2}.
\end{proof}

The results derived in Theorem 1 and 2 provide a unified framework for computing the Region approximated fugacities for arbitrary choice of regions. However, one needs to solve the possibly non-convex maximization problem \eqref{eq_opt_step1}, which is a non-trivial problem in general. Hence, in the next section, we first investigate the effect of different choices of regions on the accuracy and complexity of computing RAF. Then, we propose distributed algorithms for some ``useful" choices of regions.




\section{Choice of regions} \label{sec_choice}
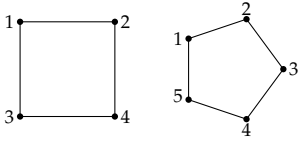
\begin{figure}
\begin{center}
\begin{tikzpicture}[scale=.7]
\draw
    (1,0)--(0.30,0.95)
    (0.30,0.95)--(-0.80,0.58) 
    (-0.80,0.58)--(-0.80,-0.58)
    (-0.80,-0.58)--(0.30,-0.95)
    (0.30,-0.95)--(1,0);
    
    \fill (1,0) circle(0.06cm);
    \fill (0.30,0.95)circle(0.06cm);
    \fill (-0.80,0.58)circle(0.06cm);
    \fill (-0.80,-0.58)circle(0.06cm);
    \fill (0.30,-0.95)circle(0.06cm);
    
    \draw (2-6,-.9) -- (2-6,.9) -- (3.8-6,.9) -- (3.8-6,-.9) -- (2-6,-.9); 
    
    \fill (2-6,-.9) circle(0.06cm);
    \fill (2-6,.9)circle(0.06cm);
    \fill (3.8-6,.9)circle(0.06cm);
    \fill (3.8-6,-.9)circle(0.06cm);
    
    
\node at (2-6-.2,-.9) {\begin{scriptsize}3\end{scriptsize}};
\node at (2-6-.2,.9) {\begin{scriptsize}1\end{scriptsize}};
\node at (3.8-6+.2,.9) {\begin{scriptsize}2\end{scriptsize}};
\node at (3.8-6+.2,-.9) {\begin{scriptsize}4\end{scriptsize}};
    

\node at (-0.80-0.2,0.58) {\begin{scriptsize}1\end{scriptsize}};
\node at (0.30,0.95+.2) {\begin{scriptsize}2\end{scriptsize}};
\node at (-0.80-.2,-0.5) {\begin{scriptsize}5\end{scriptsize}};
\node at (0.30,-0.95-.2) {\begin{scriptsize}4\end{scriptsize}};
\node at (1+.2,0) {\begin{scriptsize}3\end{scriptsize}};
\end{tikzpicture}
\end{center}
\caption{Illustration of Ring topology: 4-cycle, 5-cycle.}
\label{fig_ring}
\end{figure}
\begin{figure*}
\begin{minipage}{.5 \textwidth}
\centering
\includegraphics[scale=0.5]{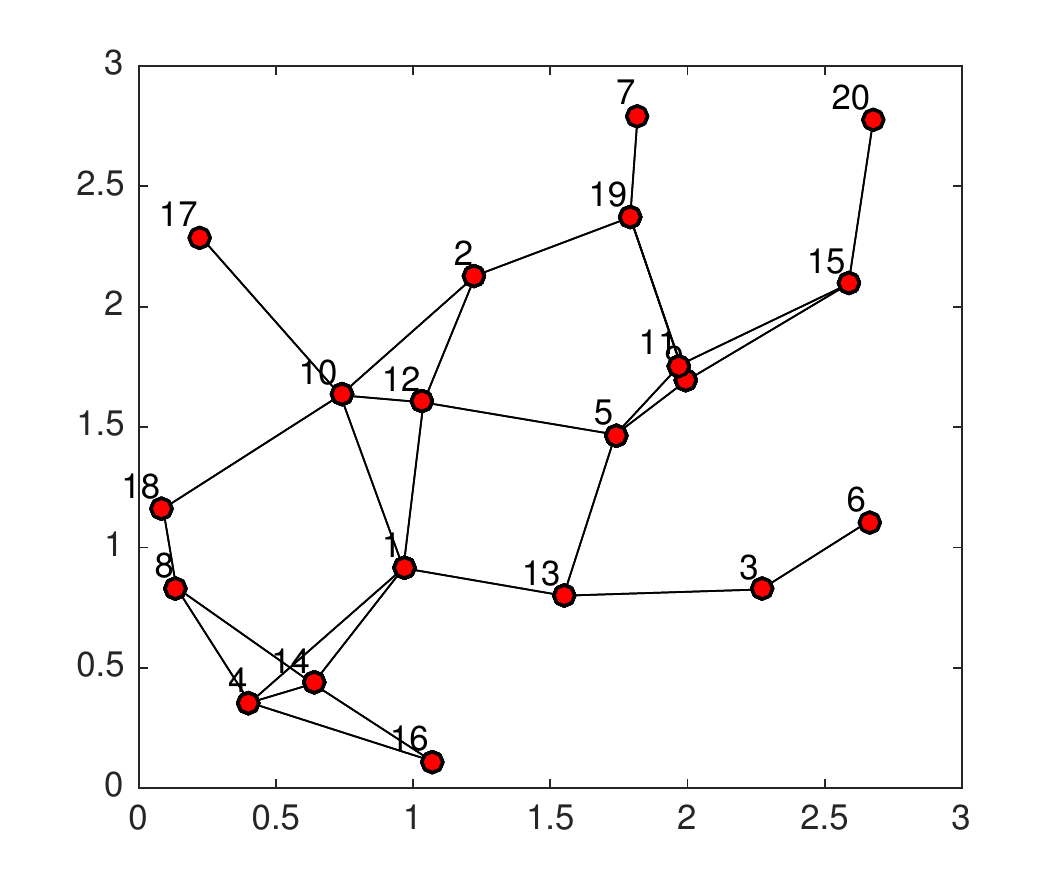}
\caption{A typical realization of a random geometric graph that is generally used to model wireless networks. Here, each vertex represents a wireless link, and two vertices are connected if they are within a certain distance called the interference radius.}
\label{fig_typical}
\end{minipage}
\begin{minipage}{.5 \textwidth}
\includegraphics[scale=0.5]{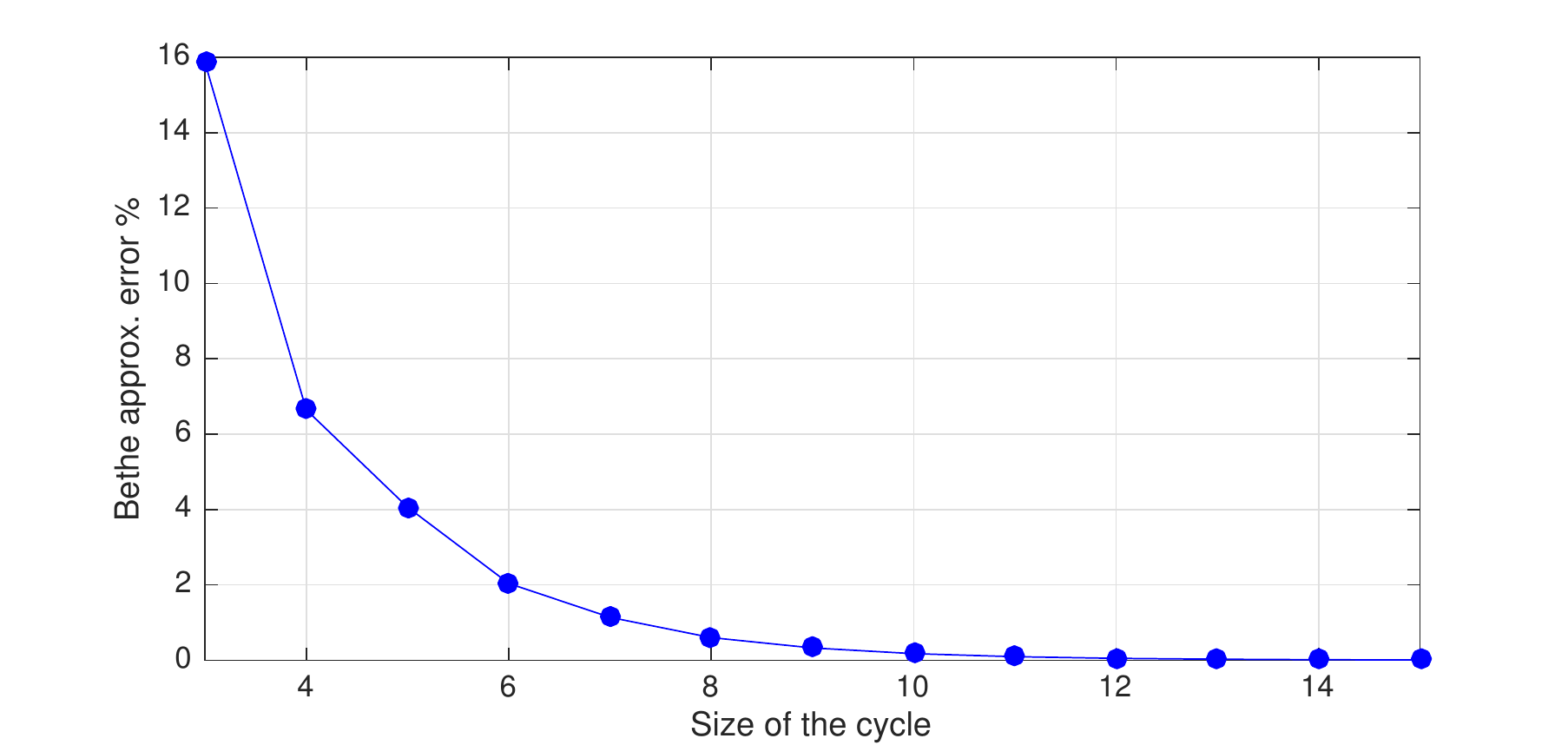}
\caption{Bethe approximation error for cycles of different sizes}
\label{fig_cycle}
\end{minipage}
\end{figure*}
The accuracy of the Regional approximation crucially depends on the choice of the regions. As the collection of regions becomes larger, the accuracy improves. However, for arbitrary choices of regions, the computation of RAF may not be amenable to a distributed implementation. Hence, the major challenge in using the Regional approximation framework for an algorithm like CSMA, is in choosing the regions that are as large as possible, while retaining the property of distributed implementation. 
For example, if we consider a special case of Regional approximation called the Bethe approximation \cite{yedidia}, simple distributed algorithms are known for the fugacities \cite{bethe_jshin} \cite{allerton_bethe}.

\subsection{Bethe approximation}
Under the Bethe approximation framework \cite{yedidia}, the collection of regions $\Re$ includes only the regions of cardinality one and two. Specifically, those are the regions corresponding to the vertices, and the edges of the conflict graph. Hence, the Bethe approximation is easy to implement. However, its accuracy is poor for topologies that contain many small cycles. Hence, Bethe approximation is not the right choice for typical wireless networks that inherently contain small cycles. (For example, Figure \ref{fig_typical} shows a typical realization of a random geometric graph that is generally used to a model wireless network. It can be observed that Figure \ref{fig_typical} contains small cycles.)

We consider a ring topology (see Figure \ref{fig_ring}), and plot the percentage error due to the Bethe approximation \cite[Section III]{bethe_jshin} as a function of the cycle size in Figure \ref{fig_cycle}. Here, the Bethe approximation error is defined as the loss in service rate that occurs due to the approximation used in computing the fugacitites (precise definition is provided later).  It can be observed from Figure \ref{fig_cycle}, that the significant error is mainly due to cycles of size $3,4$. For cycles of size $5$ or more, the error is within $5$ percent. Hence, if we can include the cycles of size $3,4$ in the collection of regions, the error can be significantly improved. To that end, we discuss a class of Regional approximations called the Kikuchi approximations, which can accommodate the cycles in the collection of regions.

\subsection{Kikuchi approximation}
Kikuchi approximation is a special case of Regional approximation framework. The Kikuchi approximation framework imposes certain restrictions on how the regions, and the counting numbers can be selected. Specifically, 
\begin{enumerate}
\item The collection of regions $\Re$ should be closed under intersection.
\item The counting numbers should satisfy
\begin{align}
\sum_{\{q \in \Re \;|\; r \subseteq q\}} c_q =1, \forall r \in \Re. \label{eq_counting0}
\end{align}
\end{enumerate}
\emph{Remark:} It can be observed that the constraints \eqref{eq_counting_general} imposed by the general Regional approximation are weaker than the constraints \eqref{eq_counting0} imposed by the Kikuchi approximation. In particular, Regional approximation imposes the constraints \eqref{eq_counting0} only on the singleton regions, \ie,
\begin{align}
\sum_{\{q \in \Re \;|\; \{i\} \subseteq q\}} c_q =1, \forall i \in V.
\end{align}
Cluster Variation Method (CVM) \cite{yedidia} is a well-known approach to obtain a collection of regions $\Re$, that satisfy the properties of the Kikuchi approximation framework. In this approach, we start with an initial set of regions $\Re_0$ called the maximal regions. Then $\Re$ is obtained by including the set of the maximal regions $\Re_0$, and all the regions that can be obtained by taking all the possible intersections of the maximal regions. Further, the counting numbers are computed as follows. First, all the maximal regions are assigned a counting number $1$. Then the counting numbers of the other regions are iteratively computed using
\begin{align}
c_r = 1 - \sum_{\{q \in \Re \;|\; r \subset q\}} c_q, \; \; \forall r \in \Re. \label{eq_counting_formula}
\end{align}

In the next section, we use this Kikuchi approximation framework to the improve the approximation error incurred due to 3-cycles. 

\section{Clique based regions to improve the error due to 3-cycles} \label{sec_clique}
Let $\Re_0$ be the set of all the maximal cliques in the conflict graph. Let $\Re$ denote the set of regions obtained by including the set of the maximal cliques $\Re_0$, and all the regions that can be obtained by taking all the possible intersections of maximal cliques. Note that this choice of regions ensures that the error-inducing 3-cycles (which are nothing but the cliques of size 3), are included in the collection of the regions $\Re$. Let the counting numbers be obtained by \eqref{eq_counting_formula}.

\subsection{Explicit formula for RAF}
In Theorem \ref{thm_clique}, we derive explicit formula for the RAF.

\begin{theorem} \label{thm_clique}
Let $\Re_0$ be the set of maximal cliques in the conflict graph. Let $\Re$ denote the collection of regions obtained by performing cluster variation method with $\Re_0$ as the initial set of regions. Let the counting numbers be defined by \eqref{eq_counting_formula}. Then, for any given service rate requirements $\{s_i\}$, the Region approximated fugacities are given by
\begin{align}
\exp(\tilde{v}_i)&= s_i  \prod_{\{r \in \Re \; | \; i \in r\}} \Big(1- \sum_{j \in r} s_j\Big)^{-c_r}, \; \; i \in \N. \label{eq_thm_clique}
\end{align}
\end{theorem}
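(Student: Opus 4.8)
The plan is to bypass the optimization altogether: for clique‑based regions the constraint set of the regional entropy maximization~\eqref{eq_opt_step1} turns out to be a single point, so that the (possibly non‑convex) program is trivial and its unique maximizer $\{b_r^*\}$ can be written down in closed form, after which Theorem~\ref{thm_step2} delivers the fugacities directly. First I would record the structural fact that \emph{every} region in $\Re$ is a clique: the members of $\Re_0$ are cliques by definition, the intersection of two cliques is again a clique (its vertices remain pairwise adjacent), and the cluster variation method only adjoins such intersections together with the singletons, which are trivial cliques. Hence for each $r\in\Re$ the locally feasible schedules are exactly $\I_r=\{\mathbf 0\}\cup\{\x_r^i:i\in r\}$, since a clique carries at most one active vertex; in particular a regional distribution $b_r$ is fully described by the numbers $b_r(\x_r^i)$, $i\in r$, via $b_r(\mathbf 0)=1-\sum_{i\in r}b_r(\x_r^i)$.

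Next I would use local consistency~\eqref{eq_consistency} together with the singleton constraint~\eqref{eq_bi_si} to pin down these numbers. For any $r\in\Re$ and $i\in r$ one has $\{i\}\subseteq r$, and marginalizing $b_r$ onto $\{i\}$ at $x_i=1$ leaves only the schedule $\x_r^i$ (every other vertex of the clique is forced idle), so $b_r(\x_r^i)=b_i(1)=s_i$. Thus the only candidate feasible point is $b_r^*(\x_r^i)=s_i$ for $i\in r$ and $b_r^*(\mathbf 0)=1-\sum_{j\in r}s_j$, and a short check confirms it is feasible for any rate vector with $\sum_{j\in r}s_j<1$ on every clique $r$ (e.g.\ $s$ in the interior of $\Lambda$): normalization is immediate, $b_r^*(\mathbf 0)>0$, and for $r\subset q$ the consistency~\eqref{eq_consistency} holds because summing $b_q^*$ over $\x_q\setminus\x_r$ merely transfers the masses $s_j$, $j\in q\setminus r$, onto the all‑idle configuration of $r$. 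Therefore the constraint set of~\eqref{eq_opt_step1} is the singleton $\{b_r^*\}$, which is trivially its unique — hence a local — optimizer, so Theorems~\ref{thm_step1} and~\ref{thm_step2} apply to it.

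Finally I would substitute $\{b_r^*\}$ into the formula of Theorem~\ref{thm_step2}:
\[
\exp(\tilde v_i)=\prod_{\{r\in\Re\,|\,i\in r\}}\!\left(\frac{b_r^*(\x_r^i)}{b_r^*(\mathbf 0)}\right)^{c_r}=\prod_{\{r\in\Re\,|\,i\in r\}}\!\left(\frac{s_i}{1-\sum_{j\in r}s_j}\right)^{c_r}=s_i^{\sum_{\{r\,|\,i\in r\}}c_r}\prod_{\{r\in\Re\,|\,i\in r\}}\!\Big(1-\sum_{j\in r}s_j\Big)^{-c_r}.
\]
Since the cluster‑variation counting numbers satisfy the Kikuchi rule~\eqref{eq_counting0}, applying it to $r=\{i\}$ gives $\sum_{\{r\in\Re\,|\,i\in r\}}c_r=1$ (which is exactly~\eqref{eq_counting_general}), so the first factor is $s_i$ and~\eqref{eq_thm_clique} follows.

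I do not expect a genuine obstacle. The only thing doing any work is the observation $|\I_r|=|r|+1$ for a clique, which collapses the feasibility polytope so that there is nothing left to optimize; the remainder is the routine verification of the consistency constraints for the candidate point, plus the bookkeeping that the counting numbers along any vertex sum to one. If one prefers the paper's ``decouple, then solve locally'' phrasing, the decoupled local entropy problems here each have a one‑point feasible set — the degenerate instance of that scheme — and the same formula drops out.
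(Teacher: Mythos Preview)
Your proposal is correct and follows essentially the same approach as the paper: observe that every region is a clique so $\I_r=\{\mathbf 0\}\cup\{\x_r^i:i\in r\}$, use local consistency with the singleton constraint to force $b_r^*(\x_r^i)=s_i$ and $b_r^*(\mathbf 0)=1-\sum_{j\in r}s_j$, note this is the unique feasible point of~\eqref{eq_opt_step1}, and substitute into Theorem~\ref{thm_step2} together with~\eqref{eq_counting_general}. Your write-up is slightly more careful in two places the paper glosses over (that intersections of cliques remain cliques, and that the candidate point actually satisfies~\eqref{eq_consistency} for nested clique regions), but the argument is the same.
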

\begin{proof}
Consider the optimization problem \eqref{eq_opt_step1} in Theorem \ref{thm_step1}. Now, we use the fact that any region $r \in \Re$ is a clique, and show that there is a unique set of $\{b_r(\x_r)\}$ that satisfy the feasibility constraints \eqref{eq_normalization} - \eqref{eq_bi_si} of the optimization problem \eqref{eq_opt_step1}. Let $\{b_r(\x_r)\}$ be feasible for \eqref{eq_opt_step1}. Since, every region $r \in \Re$ is a clique, from the definition of local feasibility, no more than one vertex can be simultaneously active in a feasible regional schedule $\x_r \in \I_r$. Hence, the set of local feasible schedules for a clique region $r$ is given by
\begin{align*}
\I_r&=\{\x_r^j\}_{j \in r} \cup \{\bf{0}\}.
\end{align*}
where $\x_r^j$ is the schedule with $x_j=1, x_k=0, \forall k\in r \setminus \{j\}$, and $\bf{0}$ is the schedule with $x_k=0, \forall k \in r$.
Then from  the local consistency constraint \eqref{eq_consistency} with $r=\{j\}$ and some $q \supset \{j\}$, we have
\begin{align*}
\sum_{\x_q \in \I_q : x_j=1} b_q(\x_q)&= b_{j}(1), \\
b_q(\x_q^j)&=b_{j}(1).
\end{align*}
Then from \eqref{eq_bi_si}, and the normalization constraint \eqref{eq_normalization}, we conclude that
\begin{align}
b_r(\x_r)=
\begin{cases}
      s_j,  &\text{if } \x_r=\x_r^j,  \\
      1- \sum_{k \in r} s_k,  &\text{if } \x_r=\bf{0}, \label{eq_reg_dist}
\end{cases}
\end{align}
is the only set of regional distributions $\{b_r(\x_r)\}$ that satisfy the feasibility constraints \eqref{eq_consistency} - \eqref{eq_bi_si}. Hence, it is the optimal solution of \eqref{eq_opt_step1}	. Substituting \eqref{eq_reg_dist} in Theorem \ref{thm_step2}, we obtain the RAF as follows:
\begin{align*}
\exp(\tilde{v}_i)&= \prod_{\{r \in \Re \; | \; i \in r\}} \left(\frac{s_i}{1-\sum_{j\in r}s_j}\right)^{c_r}, \; \; \forall i \in \N.
\end{align*}
Further, from \eqref{eq_counting_general}, we have $\sum_{\{r \in \Re | i \in r\}} c_r =1$. Hence, the RAF are given by \eqref{eq_thm_clique}.
\end{proof}

Next, we state a corollary of Theorem \ref{thm_clique} which gives the RAF corresponding to the Bethe approximation framework. This result is known due to \cite{bethe_jshin}, and we present it for the sake of completeness.
\begin{cor} \label{cor_bethe}
The Region approximated fugacities under the Bethe approximation framework are given by
\begin{align}
\exp(\tilde{v}_i)= \frac{s_i (1-s_i)^{|\N_i|-1}}{\prod_{j \in \N_i} (1-s_i-s_j)}, \; \; i \in \N, \label{eq_bethe}
\end{align}
where the set $\N_i$ denotes the neighbours of the vertex $i$ in the conflict graph.
\end{cor}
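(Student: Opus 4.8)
The plan is to specialize Theorem~\ref{thm_clique} to the Bethe region collection. First I would identify the regions: under the Bethe approximation the collection $\Re$ consists exactly of the singletons $\{i\}$, $i \in V$, together with the two-element regions $\{i,j\}$ for every edge $(i,j) \in E$. Every such region is trivially a clique (a singleton, or a $2$-clique), and this is the only structural property of the regions that the proof of Theorem~\ref{thm_clique} actually uses; in particular the fact that the Bethe edge-regions need not be \emph{maximal} cliques is immaterial. Hence the same argument shows that the feasibility constraints \eqref{eq_normalization}--\eqref{eq_bi_si} admit a unique solution, namely \eqref{eq_reg_dist}, and the RAF formula \eqref{eq_thm_clique} applies verbatim, provided we first confirm that the Bethe counting numbers satisfy the normalization \eqref{eq_counting_general}.

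Next I would compute the counting numbers. Taking the edge-regions as the maximal regions, \eqref{eq_counting_formula} assigns $c_{\{i,j\}} = 1$ to every edge region; then, for a singleton $\{i\}$, the regions strictly containing it are precisely the $|\N_i|$ edge-regions $\{i,j\}$ with $j \in \N_i$, so $c_{\{i\}} = 1 - \sum_{j \in \N_i} c_{\{i,j\}} = 1 - |\N_i|$. Summing over all regions containing $i$ then gives $c_{\{i\}} + \sum_{j \in \N_i} c_{\{i,j\}} = (1 - |\N_i|) + |\N_i| = 1$, so \eqref{eq_counting_general} indeed holds.

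It then remains to substitute these regions and counting numbers into \eqref{eq_thm_clique}. The regions containing a fixed vertex $i$ are the singleton $\{i\}$, which contributes the factor $(1 - s_i)^{-c_{\{i\}}} = (1-s_i)^{|\N_i|-1}$, and the edges $\{i,j\}$, $j \in \N_i$, each contributing $(1 - s_i - s_j)^{-1}$. Multiplying these with the leading $s_i$ yields
\begin{align*}
\exp(\tilde v_i) = s_i\,(1-s_i)^{|\N_i|-1} \prod_{j \in \N_i} (1-s_i-s_j)^{-1},
\end{align*}
which is exactly \eqref{eq_bethe}.

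The only point requiring a little care — and hence the closest thing to an obstacle — is that Theorem~\ref{thm_clique} was stated for the cluster-variation collection generated by the \emph{maximal} cliques, whereas here the maximal regions are all the edges. One should therefore state explicitly that the proof of Theorem~\ref{thm_clique} uses nothing about maximality, only that every region is a clique and that \eqref{eq_counting_general} holds; both have been verified above, so the specialization is legitimate and the rest is the routine substitution displayed above.
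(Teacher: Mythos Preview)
Your proposal is correct and follows essentially the same route as the paper: identify the Bethe regions as singletons and edges, observe that these are all cliques so that \eqref{eq_reg_dist} and hence \eqref{eq_thm_clique} apply, compute $c_{\{i,j\}}=1$ and $c_{\{i\}}=1-|\N_i|$, and substitute. If anything, you are slightly more explicit than the paper in flagging and resolving the ``maximality'' caveat.
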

\begin{proof}
 As discussed earlier, in the Bethe approximation framework, the collection of regions $\Re$ includes only the regions corresponding to the edges and the vertices of the conflict graph, \ie, 
 \begin{align*}
  \Re= \lbrace\{i\}\; |\; i \in V \rbrace \cup \lbrace\{i,j\} \;| \; (i,j) \in E \rbrace.
 \end{align*}
 Since an edge is also a clique of the graph,  the observation \eqref{eq_reg_dist} is valid for the regions in the Bethe approximation framework. Hence, the result \eqref{eq_thm_clique} is applicable for this case.  Now, let us consider the regions of the Bethe approximation, and compute their counting numbers \eqref{eq_counting_formula}. Specifically, for a region $r=\{i, j\}$ corresponding to an edge $(i,j) \in E$ of the conflict graph, the counting number 
 \begin{align*}
c_{\{i,j\}}&= 1- \sum_{\{q \in \Re | \{i,j\} \subset q\}} c_q, \\
&= 1,
 \end{align*}
since a region corresponding to an edge $\{i,j\}$ is not a subset of any another edge or a vertex, \ie, $\{q \in \Re | \{i,j\} \subset q\}$ is an empty set.  

For a region $\{i\}$, corresponding to a vertex, the counting number 
\begin{align*}
c_{i}&= 1- \sum_{\{q \in \Re | \{i\} \subset q\}} c_q, \\
&= 1 - \sum_{j \in \N_i} c_{\{i,j\}},\\
&= 1- |\N_i|.
\end{align*}
Now, substituting these counting numbers in \eqref{eq_thm_clique} gives us the Bethe approximated fugacities \eqref{eq_bethe}.
\end{proof}

\subsection{Distributed Algorithm}
\begin{figure}
\begin{center}
  \hspace{-2mm}
\begin{tikzpicture}[scale=0.7]

\fill (0.5,0.5)circle(0.06cm);
\fill (3.5,0.5)circle(0.06cm);
\fill (1,2)circle(0.06cm);
\fill (3,2)circle(0.06cm);
\fill (4,2)circle(0.06cm);
\fill (1,1)circle(0.06cm);
\fill (3,1)circle(0.06cm);
\fill (4,1)circle(0.06cm);

\node at (.9,2.2) {\begin{scriptsize}8\end{scriptsize}};
\node at (1.1,.8) {\begin{scriptsize}2\end{scriptsize}};
\node at (3,2.2) {\begin{scriptsize}7\end{scriptsize}};
\node at (2.9,0.8) {\begin{scriptsize}3\end{scriptsize}};
\node at (4.1,2.2) {\begin{scriptsize}6\end{scriptsize}};
\node at (4.1,0.8) {\begin{scriptsize}5\end{scriptsize}};

\draw (1,2)-- (3,2) -- (3,1) -- (1,1) -- (1,2);  

\draw (3,2)-- (4,2) -- (4,1) -- (3,1) -- (3,2); 

\draw (1,1) -- (0.5,0.5); 
\node at (0.35,0.4) {\begin{scriptsize}1\end{scriptsize}};

\draw (3,1) -- (3.5,0.5); 
\node at (3.65,0.4) {\begin{scriptsize}4\end{scriptsize}};
;

\draw (1,1) -- (3,2); 

\draw (4,1) -- (3,2); 

\draw (3,1) -- (4,2); 
\end{tikzpicture}
  \vspace{-2mm}
 \caption{An example of a conflict graph with 8 vertices} 
  \label{fig_cliq}
  \end{center}
  \end{figure}
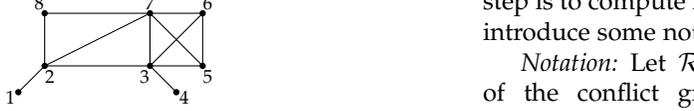
  
  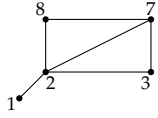
\begin{figure}
\begin{center}
  \hspace{-2mm}
\begin{tikzpicture}[scale=0.7]

\fill (0.5,0.5)circle(0.06cm);
\fill (1,2)circle(0.06cm);
\fill (3,2)circle(0.06cm);
\fill (1,1)circle(0.06cm);
\fill (3,1)circle(0.06cm);

\node at (.9,2.2) {\begin{scriptsize}8\end{scriptsize}};
\node at (1.1,.8) {\begin{scriptsize}2\end{scriptsize}};
\node at (3,2.2) {\begin{scriptsize}7\end{scriptsize}};
\node at (2.9,0.8) {\begin{scriptsize}3\end{scriptsize}};

\draw (1,2)-- (3,2) -- (3,1) -- (1,1) -- (1,2);  


\draw (1,1) -- (0.5,0.5); 
\node at (0.35,0.4) {\begin{scriptsize}1\end{scriptsize}};

;

\draw (1,1) -- (3,2); 


\end{tikzpicture}
  \vspace{-2mm}
 \caption{Local neighbourhood topology at vertex $2$ of Figure \ref{fig_cliq}} 
  \label{fig_local_topology}
  \end{center}
  \end{figure}
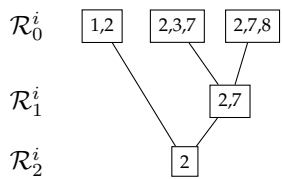
\begin{figure}
  \centering
  \begin{tikzpicture}[scale=1]

\node at (4,2) {$\Re_0^i$};
\node at (4,1) {$\Re_1^i$};
\node at (4,0.2) {$\Re_2^i$};

\node (rect12) at (5,2) [draw,minimum width=.25cm,minimum height=.25cm] {\begin{scriptsize}1,2\end{scriptsize}};
\node (rect237) at (6,2) [draw,minimum width=.25cm,minimum height=.25cm] {\begin{scriptsize}2,3,7\end{scriptsize}};
\node (rect278) at (7,2) [draw,minimum width=.25cm,minimum height=.25cm] {\begin{scriptsize}2,7,8\end{scriptsize}};

\node (rect27) at (6.7,1) [draw,minimum width=.25cm,minimum height=.25cm] {\begin{scriptsize}2,7\end{scriptsize}};


\node (rect2) at (6.1,0.2) [draw,minimum width=.25cm,minimum height=.25cm] {\begin{scriptsize}2\end{scriptsize}};



  \path (rect237) edge (rect27);
    \path (rect278) edge (rect27);
        \path (rect12) edge (rect2);
        \path (rect27) edge (rect2);

\end{tikzpicture}
\captionof{figure}{Pictorial representation of the intersections of the cliques at vertex $i=2$.}
  \label{fig_reg}
\end{figure}

We now propose a distributed algorithm to estimate the fugacities using cluster variation method with cliques as the maximal regions. Each link in the network can independently execute the algorithm once it obtains the a) target service rates of the neighbours, b) the local one-hop neighbourhood topology (See Figures \ref{fig_cliq}, \ref{fig_local_topology} for an example).

There are mainly two steps in the algorithm at a link $i$. The first step involves computing the maximal cliques, and their intersections in which the link $i$ is part of. The next step is to compute its fugacity by using the formula \eqref{eq_thm_clique}. We introduce some notations before we present the algorithm.

\emph{Notation:} Let $\Re_0^i$ be the collection of maximal cliques of the conflict graph $G(V,E)$ in which the vertex $i$ is part of. For example, if we consider the graph in Figure \ref{fig_cliq},  then $\Re_0^2 =\{\{1,2\}, \{2,8,7\}$, $\{2,3,7\}\}$. Similarly $\Re_0^3= \{\{2,3,7\}, \{3,5,6,7\}, \{3,4\} \}$. Using the information about the local topology, any standard algorithm like \cite{cliques_complexity} can be used for finding this set of maximal cliques $\Re_0^i$. For every maximal clique $r \in \Re_0^i$, let us associate a counting number $c_r=1.$
\noindent\rule[0.5ex]{\linewidth}{0.5pt}
\textbf{Algorithm 1:} Clique based distributed algorithm at link $i$\\
\noindent\rule[0.5ex]{\linewidth}{0.5pt}
\; \; \emph{Input:}  $\Re_0^i$, service rates $\{s_j\}_{j \in \N_i}$; \hspace{0.25cm}
\emph{Output:}  fugacity $\tilde{v}_i$.
\begin{enumerate}
\item Consider a variable $l$ called level, and initialize $l=0$.

\item Obtain $\Re_{l+1}^i$ by intersecting the cliques in level $l$, with the cliques in levels less than or equal to $l$, \ie,
\begin{align*}
\Re_{l+1}^i:= \{q_1 \cap q_2 \;|\;  q_1 \in \Re_{l}^i, q_2 \in \cup_{k \leq l} \Re_{k}^i, \; q_1 \neq q_2\}.
\end{align*}
 If  there are no intersections, \ie, $\Re_{l+1} = \Phi$, go to Step 6; Else continue.

\item From the set $\Re_{l+1}^i$, discard the following:
\begin{itemize}
\item[(i)] Cliques which are already present in a previous level, \ie, discard $r \in \Re_{l+1}^i$ if $r \in \cup_{k \leq l} \Re_{k}^i$.

\item[(ii)] Cliques which are  proper subsets of some other cliques in $\Re_{l+1}^{i}$, \ie, discard $r \in \Re_{l+1}^i$ if there exist any other set $q \in \Re_{l+1}^i$ such that $r \subset q$.
\end{itemize}

\item For each clique $r \in \Re_{l+1}^i$, compute 
\begin{align}
c_r= 1 - \sum_{q \in \S(r)} c_q, \label{eq_counting}
\end{align}
where $\S(r)=\{q \in \cup_{k \leq l}\Re^i_{k}\; | \; r \subset q\}$ is the set of cliques which are super sets of a given set $r$.
\item Increment $l$ by $1$, and go to step 2.

\item Let $\Re^i:= \cup_k \Re_k^i$ denote the collection of all the regions computed above. Then the fugacity is computed as
\begin{align}
\exp(\tilde{v}_i) = s_i \prod_{r \in {\Re^i}} \Big(1- \sum_{j \in r} s_j \Big)^{-c_r}.  \label{eq_main}
\end{align}
\end{enumerate}
\noindent\rule[0.5ex]{\linewidth}{0.5pt}

\emph{Complexity:} The worst case complexity incurred by a link to compute the corresponding maximal cliques $\Re_0^i$ is $O(3^{d/3})$, where $d$ is the the maximum degree of the graph \cite{cliques_complexity}. Hence, for spatial networks, where the degree of the graph does not scale with the network size, our algorithm computes the fugacities with $O(1)$ complexity. 

 \emph{Information exchange:} The information exchange required for our algorithm is very limited, since the algorithm is fully distributed except for obtaining the local topology information and neighbours' service rates.

 \emph{Example:} Let us consider the conflict graph shown in Figure \ref{fig_cliq}, and compute the fugacity for the vertex $2$ using the proposed algorithm. The  local topology, and the set of regions at vertex $2$ are shown in Figures \ref{fig_local_topology}, \ref{fig_reg}. The  set of maximal cliques containing the vertex $2$ is $\Re_0^2 =\{\{1,2\}, \{2,8,7\}$, $\{2,3,7\}\}$. Considering their intersections we get $\Re_1^2=\{\{2,7\}, \{2\}\}$. However, we discard the set $\{2\}$ since it is a proper subset of another region $\{2,7\}$ in the same level. Hence $\Re_1^2= \{\{2,7\}\}$. Now, $\Re_2^2$ is obtained by intersecting $\{2,7\}$ with regions in the previous level $\Re_0^2$. The only new region we obtain is the singleton set $\{2\}$, \ie,  $\Re_2^2=\{\{2\}\}$. 
 
Next, we compute the counting numbers of these regions. As defined earlier, all the maximal cliques $\Re_0^2$ will be given a counting number $1$. Further from \eqref{eq_counting}, it can be easily observed that for the set $r=\{2,7\}$, $c_r=-1$, since it has two super sets namely $q_1=\{2,8,7\}, \;q_2= \{2,3,7\}$  with $c_{q_1}=1, c_{q_2}=1$. Similarly, since every region in $\Re_0^2 \cup \Re_1^2$ is a super set of the region $r=\{2\}$, the counting number of $r=\{2\}$ is $c_r= 1- c_{\{2,7\}} - c_{\{1,2\}} - c_{\{2,8,7\}} - c_{\{2,3,7\}} = -1.$ Hence, the following expression gives the fugacity $\exp(\tilde{v}_2)$:
\begin{align*}
\frac{s_2  (1-s_2-s_7)(1-s_2)}{(1-s_1-s_2) (1-s_2-s_8-s_7) (1-s_2-s_3-s_7)}.
\end{align*}
Next, we prove that the Region approximated fugacities computed using the clique based approach are exact for a class of graphs called the chordal graphs.

\subsection{Exactness of Clique based approach}
It is known that the Bethe approximation is exact for tree graphs. In the Kikuchi approximation framework, we have considered larger collection of regions by including the maximal cliques of the graph. Hence, one may expect the accuracy to improve. Indeed, we confirm this intuition by proving that the clique based Kikuchi approximation is exact for a wider class of graphs called the chordal graphs (Trees are a special case of chordal graphs).

\begin{definition} \emph{(Chordal graph)}
A graph is said to be chordal if all cycles of four or more vertices have a chord. Here, a chord refers to an edge that is not part of the cycle but connects two vertices of the cycle. (See Figure \ref{fig_chordal}, in Section \ref{sec_simulations} for an example of a chordal graph.)
\end{definition}

\begin{theorem} \label{thm_chordal}
If the conflict graph is chordal, the formula proposed in \eqref{eq_main} gives the exact fugacities that correspond to the desired service rates, i.e., if we marginalize the $p(\x)$ \eqref{eq_dist} corresponding to the estimated fugacities $\tilde{v_i}$ \eqref{eq_main}, we obtain the required service rates.
\end{theorem}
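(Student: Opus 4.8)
The plan is to exploit the well-known fact that a chordal graph admits a \emph{junction tree} (clique tree) whose nodes are the maximal cliques $\Re_0$ and whose separators are exactly the pairwise intersections of adjacent cliques; moreover, the running-intersection property guarantees that for a chordal graph the Gibbs distribution \eqref{eq_dist} factorizes exactly over the maximal cliques and separators. Concretely, I would first show that for a chordal graph the collection $\Re$ produced by the cluster variation method (maximal cliques plus all iterated intersections) together with the counting numbers \eqref{eq_counting_formula} reproduces precisely the junction-tree weighting: each maximal clique gets $c_r = 1$, each separator in the junction tree gets the appropriate negative count, and any ``spurious'' higher-order intersections that are not genuine separators collapse (their counting numbers work out so that they contribute trivially, because in a chordal graph intersections of three or more maximal cliques coincide with intersections already accounted for at a lower level). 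This is essentially a statement that the region graph of the CVM applied to a chordal graph is \emph{minimal} in the sense of Yedidia et al., hence the regional entropy $H_{\Re}$ equals the true entropy $H(\x)$ for any locally consistent $\{b_r\}$ that are in fact consistent marginals of a genuine distribution.

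Next I would invoke Theorem~\ref{thm_clique}: since every region is a clique, the feasibility constraints \eqref{eq_normalization}--\eqref{eq_bi_si} pin down the regional distributions $\{b_r^*\}$ uniquely via \eqref{eq_reg_dist}, and these are automatically locally consistent. The key remaining point is that on a chordal graph these uniquely-determined regional distributions are \emph{globally} consistent — i.e. there is a genuine distribution $q(\x)$ on $\I$ whose marginals on every region $r \in \Re$ equal $b_r^*$. Because of the junction-tree structure, one can write down this $q(\x)$ explicitly as the product of clique marginals divided by separator marginals; one checks it is supported on $\I$ (since each factor vanishes off $\I_r$) and that it has the prescribed singleton marginals $s_i$. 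Then, by the exactness of the CVM region graph on a tree-structured region graph (Theorem of Yedidia, or directly: the Lagrangian stationarity conditions of the RFE reduce to the Belief-Propagation fixed-point equations on the junction tree, which have the true marginals as their unique fixed point), the stationary point $\{b_r^*\}$ of the RFE coincides with the true marginals of the Gibbs distribution whose fugacities are given by \eqref{eq_lf_gf}. Finally I would substitute \eqref{eq_reg_dist} into \eqref{eq_lf_gf} to recover exactly the formula \eqref{eq_main}, and argue that because the region graph is exact, marginalizing $p(\x)$ in \eqref{eq_dist} with these fugacities returns the $s_i$.

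An alternative and perhaps cleaner route, which I would actually prefer to present, is a direct induction on a perfect elimination ordering of the chordal graph. Order the vertices $1,\dots,N$ so that for each $i$, its later neighbours $C_i := \{i\} \cup (\N_i \cap \{i+1,\dots,N\})$ form a clique. One shows by induction (peeling off the first vertex) that the Gibbs distribution factorizes as $p(\x) = \prod_i \psi_{C_i}(\x_{C_i})$ for suitable clique potentials, and that the marginal on any clique of the graph is a ratio of products of these potentials along the clique tree. Matching the single-vertex marginal of this factorization to $s_i$ and solving for $v_i$ yields precisely \eqref{eq_main}; the telescoping of separator contributions along the elimination ordering is exactly the product over $\{r \in \Re : i \in r\}$ with counting numbers $c_r$.

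The main obstacle will be the bookkeeping in the first route: showing that the CVM-generated region graph on a chordal graph is genuinely minimal/exact, i.e. that all the higher-order intersection regions with their counting numbers from \eqref{eq_counting_formula} exactly reproduce the junction-tree inclusion–exclusion and introduce no error into $H_{\Re}$. This requires the running-intersection property and a careful argument that $\sum_{r \in \Re} c_r H_r(b_r) = H(\x)$ whenever $\{b_r\}$ are consistent marginals of a distribution on $\I$ that is Markov with respect to the chordal graph — which is where chordality is essential (for non-chordal graphs the analogous identity fails). For the induction route, the main obstacle is instead establishing the clique-tree factorization of $p(\x)$ restricted to $\I$ and verifying that the support constraints are respected at each elimination step, but this is standard chordal-graph machinery and I expect it to go through cleanly.
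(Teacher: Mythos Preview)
Your proposal is correct in spirit and identifies the right ingredient (the junction-tree factorization available for chordal graphs), but you take a considerably more roundabout route than the paper. The paper does not go through regional-entropy exactness, RFE stationarity, or BP fixed points at all. Instead, it simply invokes the hypertree reparametrization identity: for a chordal graph with region collection $\Re$ consisting of cliques and their intersections (with CVM counting numbers), the Gibbs distribution factorizes exactly as $p(\x)=\prod_{r\in\Re} p_r(\x_r)^{c_r}$. It then evaluates this identity at two specific configurations, $\x=\mathbf{0}$ and $\x=\x^i$ (only vertex $i$ active). The first gives $Z(\v)^{-1}$, the second gives $\exp(v_i)/Z(\v)$; dividing and using \eqref{eq_reg_dist} for the clique marginals yields $\exp(v_i)=p_i(1)\prod_{r\in\Re^i}(1-\sum_{j\in r}p_j(1))^{-c_r}$, which is exactly \eqref{eq_main} with $p_i(1)=s_i$.

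So the comparison is: you argue that the RFE is exact on chordal graphs and then that its stationary point recovers the true marginals, whereas the paper bypasses the variational machinery entirely and reads off the fugacity--marginal relation by pointwise evaluation of the factorization. Your first route would require the bookkeeping you yourself flag as the main obstacle (showing $H_\Re=H$ and that the CVM counting numbers collapse correctly), which the paper sidesteps by citing the factorization as a black-box result from \cite{book_martin}. Your induction route via a perfect elimination ordering would also work and is self-contained, but again is more labor than the two-line evaluation argument. What the paper's approach buys is brevity; what yours buys is a more explicit understanding of why the CVM region graph is exact on chordal graphs, rather than importing that fact wholesale.
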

\begin{proof}
Proof is provided in Section \ref{proof_chordal}.
\end{proof}

\subsubsection*{Complete graph topology}
Now, we consider the complete graph topology and compute the RAF. Further, since the complete graph is a chordal graph, the RAF are exact.
\begin{cor}
Let the conflict graph be a complete graph. Then, the fugacities that exactly support the desired service rates $\{s_i\}$ are given by
\begin{align*}
\exp(v_i)&= \frac{s_i}{1- \sum_{j \in \N} s_j}, \; \forall i \in \N.
\end{align*}
\end{cor}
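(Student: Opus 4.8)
The plan is to specialize Theorem~\ref{thm_clique} (equivalently, the distributed formula \eqref{eq_main}) to the complete graph, and then invoke Theorem~\ref{thm_chordal} to upgrade the resulting Region approximated fugacities to \emph{exact} fugacities. First I would observe that in a complete graph on $N$ vertices, the unique maximal clique is $V = \N$ itself, so the CVM procedure of Theorem~\ref{thm_clique} produces $\Re_0 = \{V\}$. Taking intersections of maximal cliques yields nothing new (there is only one maximal clique), so the only other regions forced into $\Re$ are the singleton sets $\{i\}$, $i \in \N$, which we always adjoin by convention. Thus $\Re = \{V\} \cup \{\{i\} : i \in \N\}$.

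Next I would compute the counting numbers via \eqref{eq_counting_formula}. The maximal region $V$ gets $c_V = 1$. For a singleton $\{i\}$, the only proper supersets in $\Re$ are $V$ itself (no edge-regions are present, since edges are contained in the single maximal clique $V$ and hence were never added as separate maximal regions, nor do they arise as intersections), so $c_{\{i\}} = 1 - c_V = 0$. Now I apply formula \eqref{eq_thm_clique}: for each $i \in \N$, the product over $\{r \in \Re \mid i \in r\}$ has two factors, namely $r = \{i\}$ contributing $(1 - s_i)^{-c_{\{i\}}} = (1-s_i)^0 = 1$, and $r = V$ contributing $\bigl(1 - \sum_{j \in V} s_j\bigr)^{-c_V} = \bigl(1 - \sum_{j \in \N} s_j\bigr)^{-1}$. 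Hence $\exp(\tilde v_i) = s_i \bigl(1 - \sum_{j \in \N} s_j\bigr)^{-1}$, which is exactly the claimed expression.

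Finally, since a complete graph is trivially chordal (it has no chordless cycle of length $\ge 4$: every pair of vertices is adjacent, so any would-be long cycle has every possible chord), Theorem~\ref{thm_chordal} applies and guarantees that these Region approximated fugacities are in fact exact, i.e.\ marginalizing the Gibbs distribution \eqref{eq_dist} with $v_i = \tilde v_i$ returns the desired service rates $\{s_i\}$. I would then note this can also be verified directly: for a complete graph, $\I = \{\mathbf{0}\} \cup \{\x^i\}_{i \in \N}$ where $\x^i$ is the schedule activating only link $i$, so $Z(\v) = 1 + \sum_j e^{v_j}$ and $p_i(1) = e^{v_i}/(1 + \sum_j e^{v_j})$; setting $e^{v_i} = s_i/(1 - \sum_j s_j)$ gives $1 + \sum_j e^{v_j} = (1 - \sum_j s_j)^{-1}$ and hence $p_i(1) = s_i$, as required.

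The proof is essentially a bookkeeping exercise, so there is no serious obstacle; the only point requiring a moment of care is confirming that no edge-regions or other intermediate regions survive in $\Re$ — because the lone maximal clique absorbs all edges — so that the product in \eqref{eq_thm_clique} genuinely reduces to the two factors above. Once that is pinned down, the result follows immediately from Theorems~\ref{thm_clique} and~\ref{thm_chordal}.
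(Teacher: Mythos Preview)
Your proposal is correct and follows essentially the same approach as the paper: identify $V=\N$ as the sole maximal clique, observe $c_V=1$, apply Theorem~\ref{thm_clique}, and invoke Theorem~\ref{thm_chordal} for exactness. You are simply more explicit than the paper about the singleton regions (which carry counting number~$0$ and hence contribute trivially), and your closing direct verification via $Z(\v)=1+\sum_j e^{v_j}$ is a nice extra sanity check that the paper omits.
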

\begin{proof}
In a complete graph, the only maximal clique constitutes the whole network. Hence $r = \N$ is the only region in the collection of regions $\Re$. Further, due to \eqref{eq_counting_formula}, the counting number $c_r=1$. Hence, from Theorem \ref{thm_clique}, \ref{thm_chordal} the result follows.
\end{proof}

\section{Improving the error due to 3-cycles and 4-cycles} \label{sec_four_cycles}
In Section \ref{sec_clique}, we have considered clique based regions to improve the error due to 3-cycles. In this section, we focus on improving the error due to 4-cycles. 
Specifically, in this 4-cycle based approach, our collection of regions at a vertex $i$ includes
\begin{itemize}
\item All the 4-cycle regions that include vertex $i$.
\item All the cliques that include vertex $i$.
\end{itemize}
Further, the counting numbers are computed using \eqref{eq_counting_formula}. 

\emph{Remark:} In this paper, we only deal with 4-cycles that do not have a chord. Whenever, we use the term 4-cycle, we mean a 4-cycle without a chord.

Observe that the above collection of regions does not fall under the standard CVM framework, since our collection of regions may not include all the intersections of the 4-cycle regions.  The advantage of proposing this collection of regions is that it results in a distributed solution for fugacities. In particular, as we will prove in Theorem \ref{thm_4cycle}, the regional entropy maximization problem \eqref{eq_opt_step1} which is possibly a non-convex problem, decouples into local entropy maximization problems at each region. Further, these local entropy maximization problems have a closed form solution as stated in Theorem \ref{thm_grid_formula}. We now formally state these results.

\begin{theorem} \label{thm_4cycle}
Let the set of all the 4-cycles, and all the cliques be denoted by $\Re_{4C}$, $\Re_{Cl}$ respectively. Let $\Re= \Re_{4C} \cup \Re_{Cl}$ be the collection of regions. Let $\{b_r^*\}$ be the solution of the regional entropy maximization problem \eqref{eq_opt_step1}. For the clique regions $r \in \Re_{Cl}$, $b_r^*$ is given by \eqref{eq_reg_dist}. For the 4-cycle regions $r \in \Re_{4C}$, $b_r^*$ is given by the solution of the following optimization problem:
\begin{align}
 &\underset{b_r}{\arg \max}  \; H_r(b_r), \; \; \text{subject to} \label{eq_opt_4cycles}\\
& b_r(\x_r) \geq 0, \; \x_r \in \I_r; \;\; \sum_{\x_r \in \I_r} b_r(\x_r) =1,\nonumber\\
&\sum_{\x_r \in \I_r : x_i=1} b_r (\x_r) = s_i, \; \; \; i \in r. \label{eq_const_4cycle}
 \end{align}
 \end{theorem}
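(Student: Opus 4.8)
The plan is to establish two things: (i) that the clique regions in $\Re$ are ``detached'' from the 4-cycle regions in the sense that their pseudo-marginal constraints do not interact, so the objective $\sum_{r\in\Re} c_r H_r(b_r)$ splits, and (ii) that the only coupling among different regions in the feasibility set of \eqref{eq_opt_step1} is through the singleton marginals, which are pinned to the fixed values $\{s_i\}$ by \eqref{eq_bi_si}. Once both are in place, the maximization \eqref{eq_opt_step1} separates into one independent subproblem per maximal region (clique or 4-cycle), each carrying only its local normalization constraint \eqref{eq_normalization} and the pinned singleton-marginal constraints \eqref{eq_bi_si} for the vertices it contains; the clique subproblems were already solved in the proof of Theorem \ref{thm_clique} (their feasible set is the singleton \eqref{eq_reg_dist}, hence trivially the maximizer), and the 4-cycle subproblems are exactly \eqref{eq_opt_4cycles}.

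First I would write out the Lagrangian of \eqref{eq_opt_step1} with multipliers for \eqref{eq_normalization}, \eqref{eq_consistency}, and \eqref{eq_bi_si}, and examine which consistency constraints \eqref{eq_consistency} are actually present for this $\Re$. The key structural observation is that for $\Re = \Re_{4C}\cup\Re_{Cl}$, if $r\subset q$ with $r,q\in\Re$ and $q$ a maximal region, then $r$ must be a singleton or an edge: a proper subset of a 4-cycle that is itself a clique is an edge or a vertex (a 4-cycle has no triangle), and a proper subset of a maximal clique that also lies in $\Re$ is, after the CVM-style closure is replaced by this restricted collection, again reducible to edges and singletons. I would argue that the edge-level consistency constraints are implied once the singleton marginals agree, because for a clique region the edge marginal is forced by \eqref{eq_reg_dist} (at most one vertex active), and for a 4-cycle region the pairwise marginal on a non-adjacent pair is likewise determined by the single-vertex marginals together with local feasibility — more carefully, I'd show that any edge $\{i,j\}$ appears in at most the regions on one ``side'' and the consistency it imposes reduces to $b_i(1)=s_i$. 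The upshot is that after eliminating the redundant consistency constraints, the remaining constraint set is a product over maximal regions, so both the feasible set and the separable objective factor region-by-region.

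With separation established, for each maximal clique $r$ I invoke the argument already given in Theorem \ref{thm_clique}: local feasibility forces $\I_r = \{\x_r^j\}_{j\in r}\cup\{\mathbf 0\}$, the consistency-with-singletons plus normalization pins $b_r^*$ to \eqref{eq_reg_dist}, and since the feasible set is a single point it is vacuously the entropy maximizer. For each 4-cycle region $r$, the decoupled subproblem is precisely ``maximize $H_r(b_r)$ subject to $b_r\ge 0$ on $\I_r$, $\sum b_r =1$, and $\sum_{\x_r:x_i=1} b_r(\x_r)=s_i$ for $i\in r$,'' which is \eqref{eq_opt_4cycles}. I would then note that stitching the per-region maximizers back together yields a feasible point of \eqref{eq_opt_step1} whose objective equals the sum of the per-region optima, which is an upper bound on any feasible value by the separation; hence it is the global maximizer. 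The main obstacle I anticipate is the bookkeeping in the second step — precisely verifying that \emph{every} local-consistency constraint in \eqref{eq_consistency} is either absent (because $\Re$ omits the relevant intersections) or rendered redundant by \eqref{eq_bi_si} and the clique structure — since this is exactly the place where the non-CVM nature of the chosen $\Re$ (flagged in the remark before the theorem) has to be handled by hand rather than by appeal to the standard CVM machinery.
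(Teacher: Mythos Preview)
Your approach is essentially the paper's: fix the clique distributions to \eqref{eq_reg_dist}, observe that the only consistency constraints touching a 4-cycle $q$ come from edge and singleton subsets (since a chordless 4-cycle contains no triangle, and no 4-cycle or clique strictly contains another 4-cycle), reduce those edge constraints to the singleton-marginal constraints using local feasibility, and conclude that the 4-cycle variables decouple. The paper carries out exactly this reduction in \eqref{eq_mini1}--\eqref{eq_mini3}, noting that for an edge $(i,j)$ of the 4-cycle, $\{\x_q\in\I_q : x_i=1,\,x_j=0\}=\{\x_q\in\I_q : x_i=1\}$, so the edge-consistency constraints collapse to \eqref{eq_const_4cycle}.

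There is one genuine gap in your proposal. You pass from ``the objective $\sum_{r\in\Re} c_r H_r(b_r)$ splits'' directly to ``the decoupled subproblem is precisely `maximize $H_r(b_r)$'\,'' for each 4-cycle, but this step requires $c_r>0$ for every $r\in\Re_{4C}$. You never verify this, and the counting numbers in \eqref{eq_counting_formula} can be negative in general. The paper fills this in explicitly: since no clique and no other 4-cycle can be a strict superset of a chordless 4-cycle, the sum in \eqref{eq_counting_formula} is empty and $c_r=1$ for every $r\in\Re_{4C}$. Without this, your per-region subproblem could just as well be a minimization, and the stitching argument in your last paragraph would fail. A second, minor point: your remark about ``the pairwise marginal on a non-adjacent pair'' is a red herring---diagonals of a chordless 4-cycle are not edges of the conflict graph, hence not cliques, hence not in $\Re$, so no consistency constraint is imposed on them in the first place.
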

 \begin{proof}
As obtained in the proof of Theorem \ref{thm_clique}, the regional distributions of the set of cliques $\{b_r^*\}_{r \in \Re_{Cl}}$ are explicitly determined by the service rates $\{s_i\}$ as shown in \eqref{eq_reg_dist}. Hence, in the objective of \eqref{eq_opt_step1}, the entropy terms corresponding to the cliques regions are completely determined by the service rates. In other words, the entropy terms corresponding to the 4-cycles are the only terms that are to be considered while maximizing the objective \eqref{eq_opt_step1}. Further, since no clique or another 4-cycle can be a super set of a 4-cycle, from \eqref{eq_counting_formula}, it can be observed that $c_r=1$ for all the 4-cycle regions $r \in \Re_{4C}$. Hence, the optimization problem \eqref{eq_opt_step1} can be effectively reduced to an optimization problem over $\{b_r\}_{4C}$ as follows:

\begin{align}
  \underset{\{b_r\}_{r \in \Re_{4C}}}{\arg \max}&  \sum_{r \in \Re_{4C}}  H_r(b_r), \; \; \; \text{s.t.} \label{eq_semi_local}\\
  b_r(\x_r) &\geq 0, \; \x_r \in \I_r; \;\; \sum_{\x_r \in \I_r} b_r(\x_r) =1, \; r \in \Re_{4C},\nonumber\\
\sum_{\x_q \setminus \x_{r}} b_q (x_q) &= b_r^*(\x_r), \; \; \x_r \in \I_r,  r \in \Re_{Cl}, q \in \Re_{4C}, \text{ s.t. } r \subset q. \label{eq_consistency_semi}
\end{align}

Now, let us look at the constraints of \eqref{eq_semi_local} involving $q \in \Re_{4C}$. It is easy to observe that for a given 4-cycle region (without a chord) $q \in \Re_{4C}$, the only possible subsets in the clique regions $r \in \Re_{Cl}$ are the regions corresponding to edges $\Re_{E}$, and singleton sets. 

Further, for any $r=\{i,j\} \in \Re_{E}$, the only feasible schedules are $\I_{\{i,j\}}=\{(0,0), (1,0), (0,1)\}$. Similarly, for a singleton region $r=\{i\}$, we have $\I_{i}=\{0,1\}$. From \eqref{eq_reg_dist}, we also know that $b^*_{\{i,j\}}(1,0)=s_i$, $b^*_{\{i,j\}}(0,1)=s_j$, and $b^*_i(1)=s_i$. Hence, the set of constraints in \eqref{eq_consistency_semi} for a given $q \in \Re_{4C}$ effectively reduces to
\begin{align}
\sum_{\x_q \in \I_q: x_i=1,x_j=0} b_q (\x_q) &= s_i, \; \; \{i,j\}\in \Re_{E}, \text{ s.t.} \{i,j\} \subset q,  \label{eq_mini1}\\
\sum_{\x_q \in \I_q: x_i=0,x_j=1} b_q (\x_q) &= s_j, \; \; \{i,j\}\in \Re_{E}, \text{ s.t.} \{i,j\} \subset q,  \label{eq_mini2}\\
\sum_{\x_q \in \I_q: x_i=1} b_q (\x_q) &= s_i, \; \; \forall i \in V, \text{ s.t. } \{i\} \subset q. \label{eq_mini3}
\end{align}
Note that we did not explicitly consider the constraints corresponding to $(0,0) \in \I_{\{i,j\}}$, $0 \in \I_{\{i\}}$, since they are implicitly captured by the normalization constraints in \eqref{eq_semi_local}.

For any feasible schedule of a 4-cycle region $\x_q \in \I_q$, by the definition of local feasibility, no more than one vertex across an edge can be active. Hence, for any $(i,j) \in E$ of the 4-cycle $q$, we have $\{\x_q \in \I_q | x_i=1, x_j=0\} = \{\x_q \in \I_q | x_i=1\}$. Hence, the constraints in \eqref{eq_mini1} - \eqref{eq_mini3} can be implicitly captured by \eqref{eq_mini3} alone.
Further, in \eqref{eq_semi_local}, there are no constraints that involve the optimization variables of different 4-cycle regions. Hence, the problem \eqref{eq_semi_local} decouples into independent optimization problem at each 4-cycle as given in \eqref{eq_opt_4cycles}.
\end{proof}
Now, we derive explicit formula for the solution of the optimization problem \eqref{eq_opt_4cycles}.
\begin{theorem} \label{thm_grid_formula}
Let $r \in \Re_{4C}$ be a 4-cycle region as shown in Figure \ref{fig_ring}. Then $b_r^*$, the solution of the optimization problem \eqref{eq_opt_4cycles} satisfies the expression \eqref{eq_loop_big} (displayed on the top of the next page).
\begin{figure*}[!t]
\scriptsize
\begin{align}
\label{eq_loop_big}
\begin{split}
\frac{b_r^*(\x_r^1)}{b_r^*({\bf{0}})}&=\frac{\sqrt{(s_1 (s_2+s_3-s_4-1)+s_2 (s_3+s_4-1)+(s_3-1) (s_4-1))^2+4 s_1 s_4 (s_1+s_4-1) (s_2+s_3-1)}}{2 (s_1+s_2-1) (s_1+s_3-1)}\\
&\; \;\;+\frac{-2 s_1^2-s_1 (s_2+s_3+s_4-3)-s_2 s_3-s_2 s_4+s_2-s_3 s_4+s_3+s_4-1}{2 (s_1+s_2-1) (s_1+s_3-1)}.
\end{split}
\end{align}
\hrulefill
\vspace*{4pt}
\end{figure*}
In particular, if we assume homogeneous service rates, \ie, if $s_i=s,\; \forall i \in r$, then we have
\begin{align*}
\frac{b_r^*(\x_r^i)}{b_r^*({\bf{0}})}&=\frac{-1 + 4 s + \sqrt{1 - 4 s + 8 s^2}}{2 - 4 s}, \; \; \forall i \in r,
\end{align*}
where $\x_r^i$ denote the argument with $x_i=1, x_j=0, \forall j \in r\setminus \{i\}$, and ${\bf{0}}$ denote the argument with $x_j=0, \forall j \in r$.
\end{theorem}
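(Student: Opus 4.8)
The plan is to solve the regional entropy maximization problem \eqref{eq_opt_4cycles} in closed form, exploiting the fact that a chordless $4$-cycle has only seven locally feasible schedules. For the $4$-cycle on vertices $\{1,2,3,4\}$ of Figure~\ref{fig_ring}, whose only non-adjacent pairs are $\{1,4\}$ and $\{2,3\}$, we have $\I_r=\{\mathbf{0},\x_r^1,\x_r^2,\x_r^3,\x_r^4,\x_r^{14},\x_r^{23}\}$, where $\x_r^{14}$ (resp.\ $\x_r^{23}$) denotes the schedule with $x_1=x_4=1$ (resp.\ $x_2=x_3=1$) and all remaining coordinates zero. Since $H_r$ is strictly concave and the constraints of \eqref{eq_opt_4cycles} are affine, the problem has a unique maximizer, characterized by stationarity of its Lagrangian. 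Writing the Lagrangian with multipliers for the normalization and for the four marginal constraints and differentiating in each $b_r(\x_r)$, the optimum is of exponential-family (product) form
\begin{align*}
b_r^*(\x_r)=z\prod_{\{i\,:\,x_i=1\}}w_i,\qquad z,w_1,\ldots,w_4>0,
\end{align*}
(the non-negativity constraints are inactive, since $-\ln(\cdot)$ forces strict positivity on $\I_r$), so that $b_r^*(\x_r^i)/b_r^*(\mathbf{0})=w_i$; it therefore suffices to pin down $w_1$.

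Next I would substitute this form into the five constraints. Using the factorizations $1+w_1+w_4+w_1w_4=(1+w_1)(1+w_4)$ and $1+w_2+w_3+w_2w_3=(1+w_2)(1+w_3)$, introduce the scalars $\alpha:=z(1+w_1)(1+w_4)$ and $\beta:=z(1+w_2)(1+w_3)$. The marginal constraint for $i=1$ then reads $s_1=zw_1(1+w_4)=\alpha w_1/(1+w_1)$, so $w_1=s_1/(\alpha-s_1)$, and likewise $w_4=s_4/(\alpha-s_4)$, $w_2=s_2/(\beta-s_2)$, $w_3=s_3/(\beta-s_3)$; the normalization collapses to $z=\alpha+\beta-1$; and back-substitution yields
\begin{align*}
\beta=1-s_1-s_4+\frac{s_1s_4}{\alpha},\qquad \alpha=1-s_2-s_3+\frac{s_2s_3}{\beta}.
\end{align*}
Eliminating $\beta$ leaves a single quadratic $B\alpha^2-M\alpha-As_1s_4=0$, with $A:=1-s_2-s_3$, $B:=1-s_1-s_4$, and $M:=AB+s_2s_3-s_1s_4$; I would keep the root $\alpha=(M+\sqrt{\Delta})/(2B)$, where $\Delta:=M^2+4ABs_1s_4$, namely the one consistent with $b_r^*$ being a probability distribution (the other root is spurious).

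Finally, $b_r^*(\x_r^1)/b_r^*(\mathbf{0})=w_1=s_1/(\alpha-s_1)=2Bs_1/(K+\sqrt{\Delta})$ with $K:=M-2Bs_1$. Rationalizing the radical, the one identity to be checked is
\begin{align*}
\Delta-K^2=4Bs_1(1-s_1-s_2)(1-s_1-s_3),
\end{align*}
which follows after expansion using $s_4+B=1-s_1$ and $As_4+M-Bs_1=(1-s_1-s_2)(1-s_1-s_3)$; it reduces $w_1$ to $\frac{\sqrt{\Delta}-K}{2(s_1+s_2-1)(s_1+s_3-1)}$. Expanding $M$, $-K$, and $4ABs_1s_4$ as polynomials in $s_1,\ldots,s_4$ (using $s_2+s_3-1=-A$ and $s_1+s_4-1=-B$) matches them term by term with the expressions under the radical and in the numerator of \eqref{eq_loop_big}, which proves \eqref{eq_loop_big}. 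For the homogeneous case $s_i=s$ one gets $A=B=1-2s$, $M=(1-2s)^2$, $\Delta=(1-2s)^2(1-4s+8s^2)$, and $-K=-8s^2+6s-1=-(2s-1)(4s-1)$; cancelling the common factor $1-2s$ from numerator and denominator collapses the formula to $(-1+4s+\sqrt{1-4s+8s^2})/(2-4s)$, as claimed. The hard part will be this last bookkeeping step: verifying that the rationalized root equals the displayed closed form relies on the factorization identity above and on careful sign tracking, and the derivation tacitly assumes the service rates are strictly feasible on the $4$-cycle, so that \eqref{eq_opt_4cycles} has a unique interior optimizer and $B$, the radicand $\Delta$, and the denominator stay well behaved.
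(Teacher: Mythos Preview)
Your proposal is correct and shares the paper's core idea: the maximum-entropy optimizer on the $4$-cycle has product form, so one substitutes into the marginal constraints and solves. The paper, however, proves \emph{only} the homogeneous case explicitly: by symmetry it writes $b_r^*(\x_r)=Z^{-1}\lambda^{\sum_i x_i}$ with a single parameter $\lambda$, reads off $b_r^*(\x_r^i)/b_r^*(\mathbf{0})=\lambda$, and obtains the quadratic $s=(\lambda^2+\lambda)/(1+2\lambda^2+4\lambda)$, declaring that ``the proof can be easily extended to the general case.'' You instead carry out the general (heterogeneous) case in full, which the paper does not. Your $(\alpha,\beta)$ substitution---with $\alpha=z(1+w_1)(1+w_4)$ and $\beta=z(1+w_2)(1+w_3)$---is a genuinely nice device: it collapses the four parameters $w_i$ and the normalization into the coupled pair $\beta=B+s_1s_4/\alpha$, $\alpha=A+s_2s_3/\beta$, and hence a single quadratic in~$\alpha$. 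The rationalization identity $\Delta-K^2=4Bs_1(1-s_1-s_2)(1-s_1-s_3)$ you isolate is exactly what makes the displayed closed form \eqref{eq_loop_big} emerge, and your term-by-term matching of $M$, $-K$, and $4ABs_1s_4$ with the paper's expressions checks out. In short: same approach, but your argument actually proves the full statement rather than just the symmetric special case.
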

\begin{proof}
We prove for the case of homogeneous service rates. The proof can be easily extended to the general case. It can be shown \cite[Section 3.5]{libin_book} that the solution of a maximum entropy problem of the form \eqref{eq_opt_4cycles} is given by a product form distribution
\begin{align}
b_r^*(\x_r)& = \frac{\prod_{i \in r} \lambda^{x_i}}{Z}, \; \; \x_r \in \I_r, \label{eq_prod_4cycle}
\end{align}
for some $\lambda >0$. Here $Z$ is the normalization constant. The parameter $\lambda$ should be obtained by using the feasibility constraints \eqref{eq_const_4cycle}. Further observe that the ratio 
\begin{align}
\frac{b_r^*(\x_r^i)}{b_r^*({\bf{0}})} = \lambda, \;\;\forall i \in r. \label{eq_ratio}
\end{align}
For the 4-cycle region in Figure \ref{fig_ring}, the feasible schedules consists of  a) four schedules in which only one vertex is active, b) two schedules in which only the diagonal vertices are active. Using this structure in  \eqref{eq_const_4cycle}, \eqref{eq_prod_4cycle} we obtain 
\begin{align}
s= \frac{\lambda^2 + \lambda}{1+ 2 \lambda^2 + 4 \lambda}. \label{eq_quadratic}
\end{align}
Solving the quadratic equation \eqref{eq_quadratic}, and using the observation \eqref{eq_ratio} gives the required result.
\end{proof}

\emph{Remark:} In Theorem \ref{thm_grid_formula}, we obtained a closed form expression for the regional distributions corresponding to 4-cycles. We already got closed form expressions for the regional distributions corresponding to cliques in \eqref{eq_reg_dist}. Hence, the RAF \eqref{eq_lf_gf} under our 4-cycle based approach can be explicitly computed without solving any optimization problems.

Now to demonstrate our 4-cycle based approach, we consider the widely studied grid topology \cite{marbach}, and derive the RAF explicitly. A $4 \times 4$ grid topology is illustrated in Figure \ref{fig_grid}.
\begin{cor}
Let $s_i=s,\; \forall i \in \N$ be the service rate requirements for a grid topology. Then the RAF under the 4-cycle based method is given by
\begin{align*}
\exp(\tilde{v}_i)&=
\begin{cases}
\frac{\left(-1 + 4 s + \sqrt{1 - 4 s + 8 s^2}\right)^4}{16 (1-s) s^3}, \; \; \text{ if } |\N_i|=4, \\
\frac{\left(-1 + 4 s + \sqrt{1 - 4 s + 8 s^2}\right)^2}{4 s ( 1-2s)}, \; \; \text{ if } |\N_i|=3, \\
\frac{-1 + 4 s + \sqrt{1 - 4 s + 8 s^2}}{2 - 4 s},    \; \; \text{ if } |\N_i|=2.
\end{cases}
\end{align*}
\end{cor}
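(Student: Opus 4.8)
The plan is to specialize Theorem~\ref{thm_4cycle}, Theorem~\ref{thm_grid_formula} and the RAF formula \eqref{eq_lf_gf} to the grid after enumerating its regions and their counting numbers. The grid graph is triangle-free, so its only cliques are the singletons and the edges, and (by the Remark restricting attention to chordless $4$-cycles) its only $4$-cycle regions are the unit squares; hence $\Re=\Re_{Cl}\cup\Re_{4C}$ with $\Re_{Cl}$ the singletons and edges and $\Re_{4C}$ the unit squares, so Theorem~\ref{thm_4cycle} applies. It gives, for a clique region, the distribution \eqref{eq_reg_dist}; with $s_i\equiv s$ this yields the ratios $b_{\{i,j\}}^*(\x_{\{i,j\}}^i)/b_{\{i,j\}}^*({\bf{0}})=s/(1-2s)$ for an edge and $b_i^*(1)/b_i^*(0)=s/(1-s)$ for a singleton. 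For a unit-square region it gives the maximum-entropy solution of \eqref{eq_opt_4cycles}, whose homogeneous closed form from Theorem~\ref{thm_grid_formula} is $b_r^*(\x_r^i)/b_r^*({\bf{0}})=\mu$ for every corner $i$, where I write $\mu:=\big(-1+4s+\sqrt{1-4s+8s^2}\big)/(2-4s)$.

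Next I would compute the counting numbers via \eqref{eq_counting_formula}. No region properly contains a unit square, so every unit square has $c_r=1$; the only proper supersets of an edge are the unit squares through it, so $c_{\{i,j\}}=1-(\text{number of unit squares containing }\{i,j\})$, which is $-1$ for an ``interior'' edge (in two squares) and $0$ for a ``boundary'' edge (in one square); and the proper supersets of $\{i\}$ are its incident edges together with the unit squares having $i$ as a corner, so $c_{\{i\}}=1-\sum_{j\in\N_i}c_{\{i,j\}}-(\text{number of unit squares with corner }i)$. Splitting the vertices by degree and doing a short coordinate check, one finds that, independently of its location in the grid, a vertex with $|\N_i|=4$ touches $4$ interior edges and $4$ unit squares, giving $c_{\{i\}}=1-(-4)-4=1$; a vertex with $|\N_i|=3$ touches $2$ boundary edges, $1$ interior edge and $2$ unit squares, giving $c_{\{i\}}=1-(-1)-2=0$; and a vertex with $|\N_i|=2$ touches $2$ boundary edges and $1$ unit square, giving $c_{\{i\}}=1-0-1=0$ (consistently, $\sum_{\{r\in\Re\,|\,i\in r\}}c_r=1$ in each case).

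Substituting these ratios and exponents into \eqref{eq_lf_gf}, the boundary edges (and, for degrees $2$ and $3$, the singleton) drop out because their exponents are $0$, leaving $\big(s/(1-2s)\big)^{-4}\mu^{4}\,\big(s/(1-s)\big)$ for $|\N_i|=4$, $\big(s/(1-2s)\big)^{-1}\mu^{2}$ for $|\N_i|=3$, and $\mu$ for $|\N_i|=2$. Routine algebra --- e.g.\ using $2-4s=2(1-2s)$ to rewrite $\big((1-2s)/s\big)^{4}\mu^{4}$ as $\big(-1+4s+\sqrt{1-4s+8s^2}\big)^{4}/(16 s^{4})$, and similarly for $\mu^2$ --- then produces exactly the three stated expressions. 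The one part requiring care, and what I would flag as the main obstacle, is the boundary bookkeeping: checking that for each fixed degree the triple (incident interior edges, incident boundary edges, incident unit squares) is position-independent, so that the zero-exponent contributions of the boundary edges and of the singleton are precisely what make the final fugacity depend on $i$ only through $|\N_i|$.
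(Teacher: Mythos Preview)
Your proposal is correct and follows essentially the same route as the paper: enumerate the regions containing $i$ (unit squares, edges, singleton), compute their counting numbers from \eqref{eq_counting_formula}, and plug the ratios from \eqref{eq_reg_dist} and Theorem~\ref{thm_grid_formula} into \eqref{eq_lf_gf}. The paper only spells out the $|\N_i|=4$ case and leaves the others to the reader, whereas you treat all three degrees and are explicit about the boundary edges having counting number $0$; this extra bookkeeping is exactly what the paper's ``similarly'' glosses over.
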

\begin{proof}
We prove the result for a vertex of degree $4$. Other cases can be obtained similarly. Consider a vertex $i$ with degree 4 (for example, vertex 11 in Figure \ref{fig_grid}). Then the vertex $11$ belongs to four 4-cycle regions namely $\{7,8,11,12\},\{11,12,15,16\},\{6,7,10,11\},\{10,11,14,15\}$ . Each of these 4-cycle regions will have a counting number \eqref{eq_counting_formula} equal to $1$. Further, the intersection of these 4-cycle regions results in 4 edge based regions namely $\{11,7\},\{11,10\},\{11,15\},\{11,12\}$. Since each edge region has two super sets in the form of 4-cycles, the counting number \eqref{eq_counting_formula} of each of these edges is equal to $-1$. Further, the intersection of the edges result in the singleton set containing the considered vertex $11$. Since, all the four edges, and the four 4-cycles are super sets for this singleton vertex, the counting number of this singleton set  computed using \eqref{eq_counting_formula}  is equal to $1$. Now, from Theorems \ref{thm_4cycle}, \ref{thm_grid_formula}, we know the expressions for the optimal regional distributions $\{b_r^*\}$ of all the regions. Substituting these expressions in the formula for the RAF \eqref{eq_lf_gf} derived in Theorem \ref{thm_step2} gives the required result.
\end{proof}

\emph{Remark on Implementation:}
A vertex in the conflict graph requires the two hop neighbourhood topology to determine the 4-cycle regions that the vertex is involved in. This two hop topology can be obtained by using a simple broadcast scheme with the neighbours. For example, see \cite[Section VI-C]{bp_csma} for details.




\section{Numerical Results} \label{sec_simulations}

\begin{figure*}
\begin{minipage}{.33 \textwidth}
\centering
\includegraphics[scale=0.34]{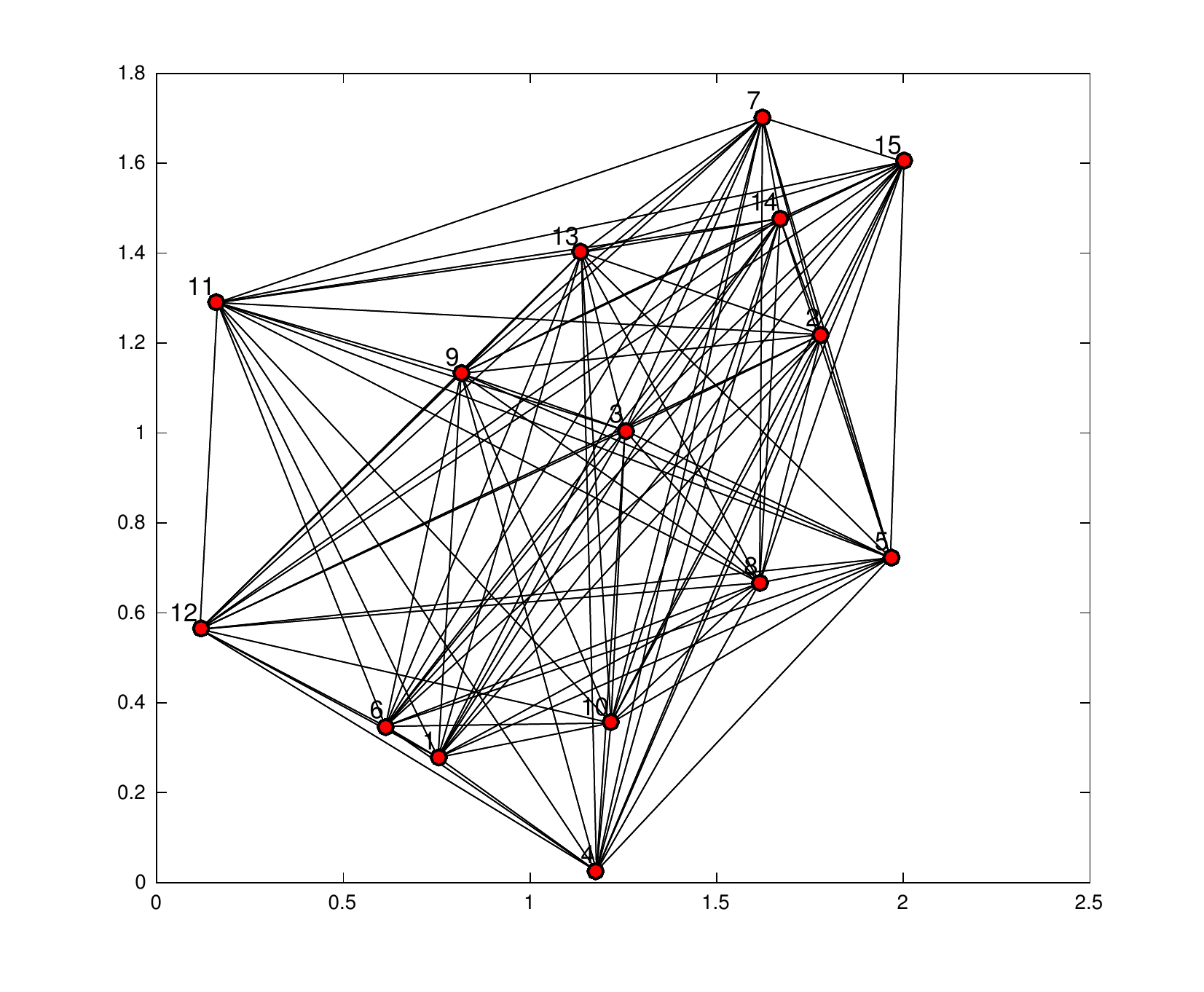} 
\caption{Complete graph}
\label{fig_complete}
\end{minipage}
\begin{minipage}{.33 \textwidth}
\centering
\includegraphics[scale=0.45]{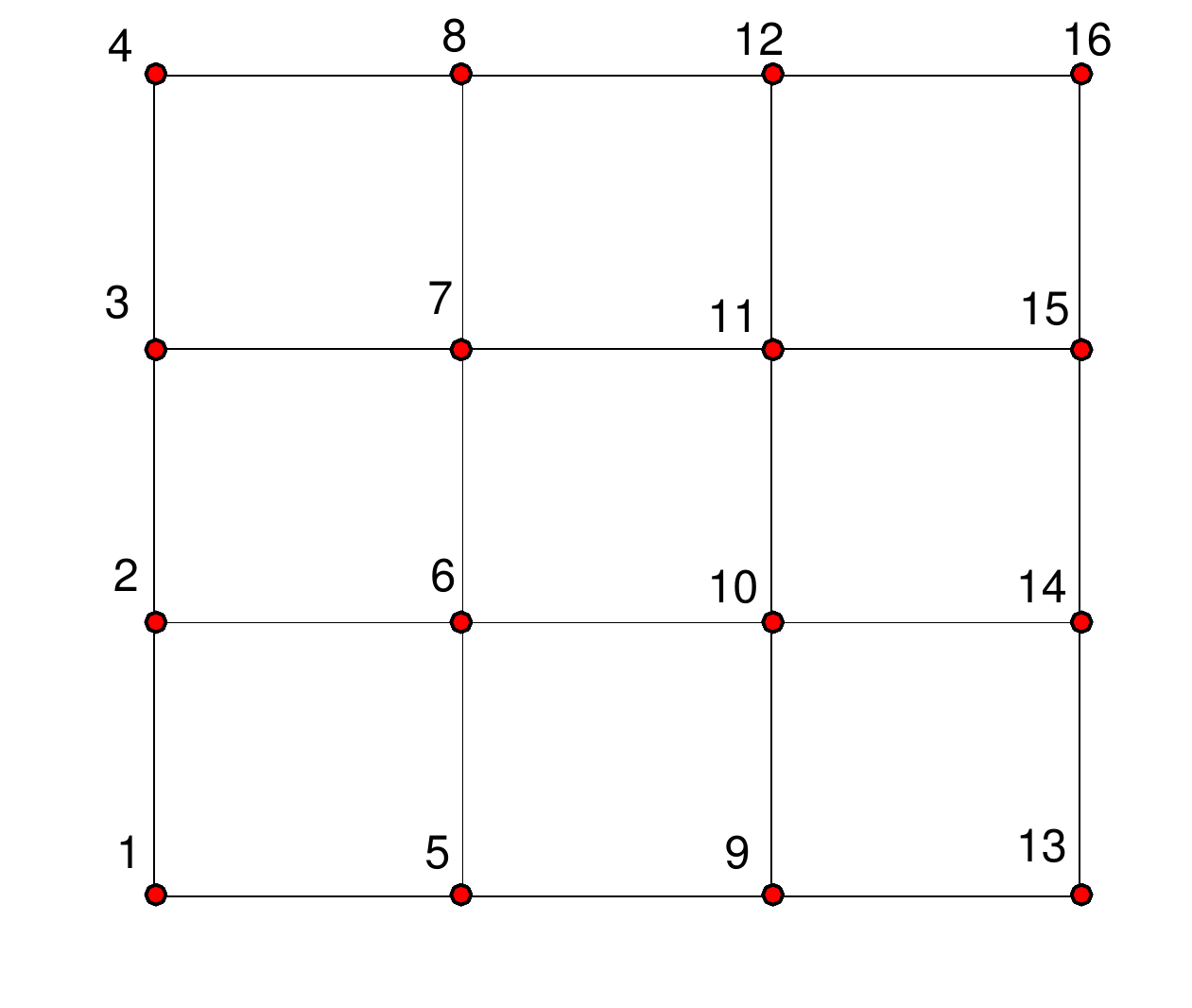} 
\caption{Grid graph}
\label{fig_grid}
\end{minipage}
\begin{minipage}{.33 \textwidth}
\centering
\includegraphics[scale=0.55]{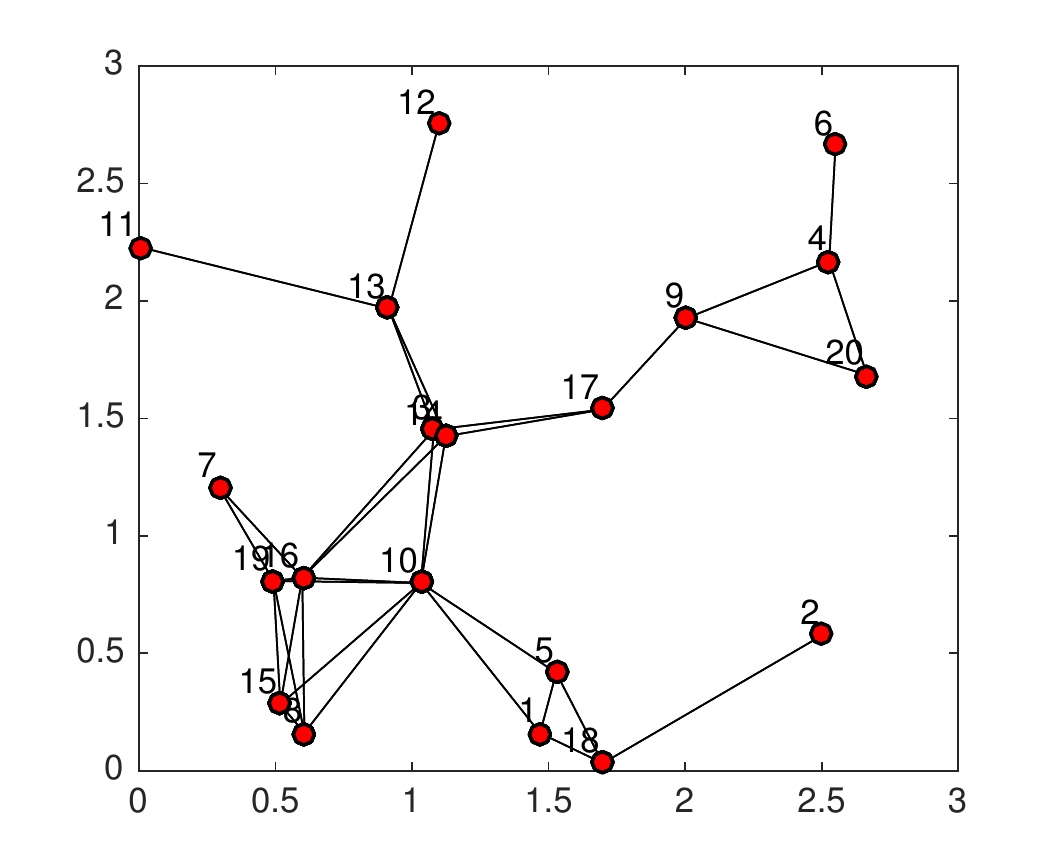} 
\caption{Chordal graph}
\label{fig_chordal}
\end{minipage}
\end{figure*}

\begin{figure*}
\begin{minipage}{.33 \textwidth}
\centering
\includegraphics[scale=0.42]{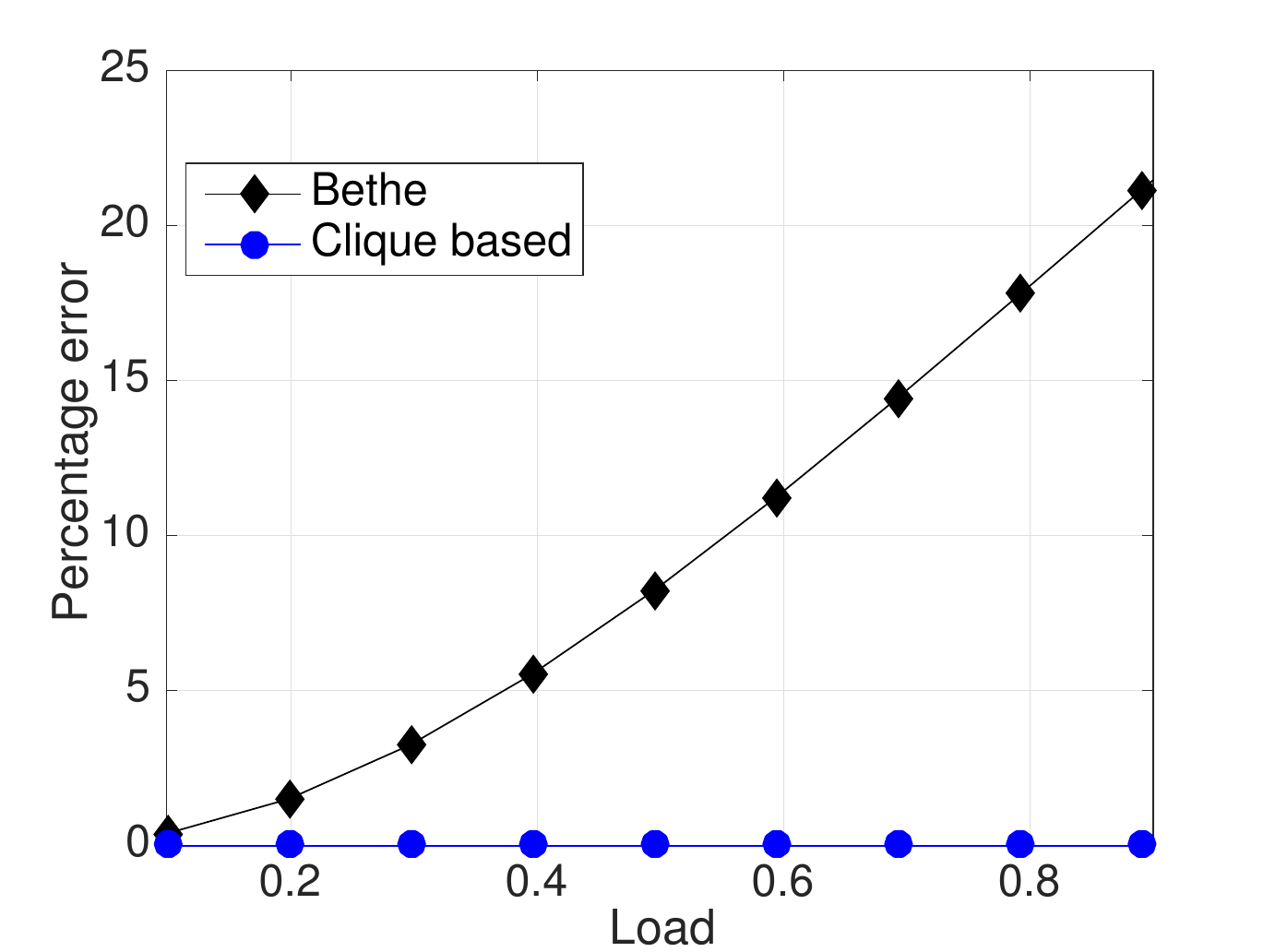} 
\caption{Error plot for Complete graph}
\label{fig_complete_prac}
\end{minipage}
\begin{minipage}{.33 \textwidth}
\centering
\includegraphics[scale=0.44]{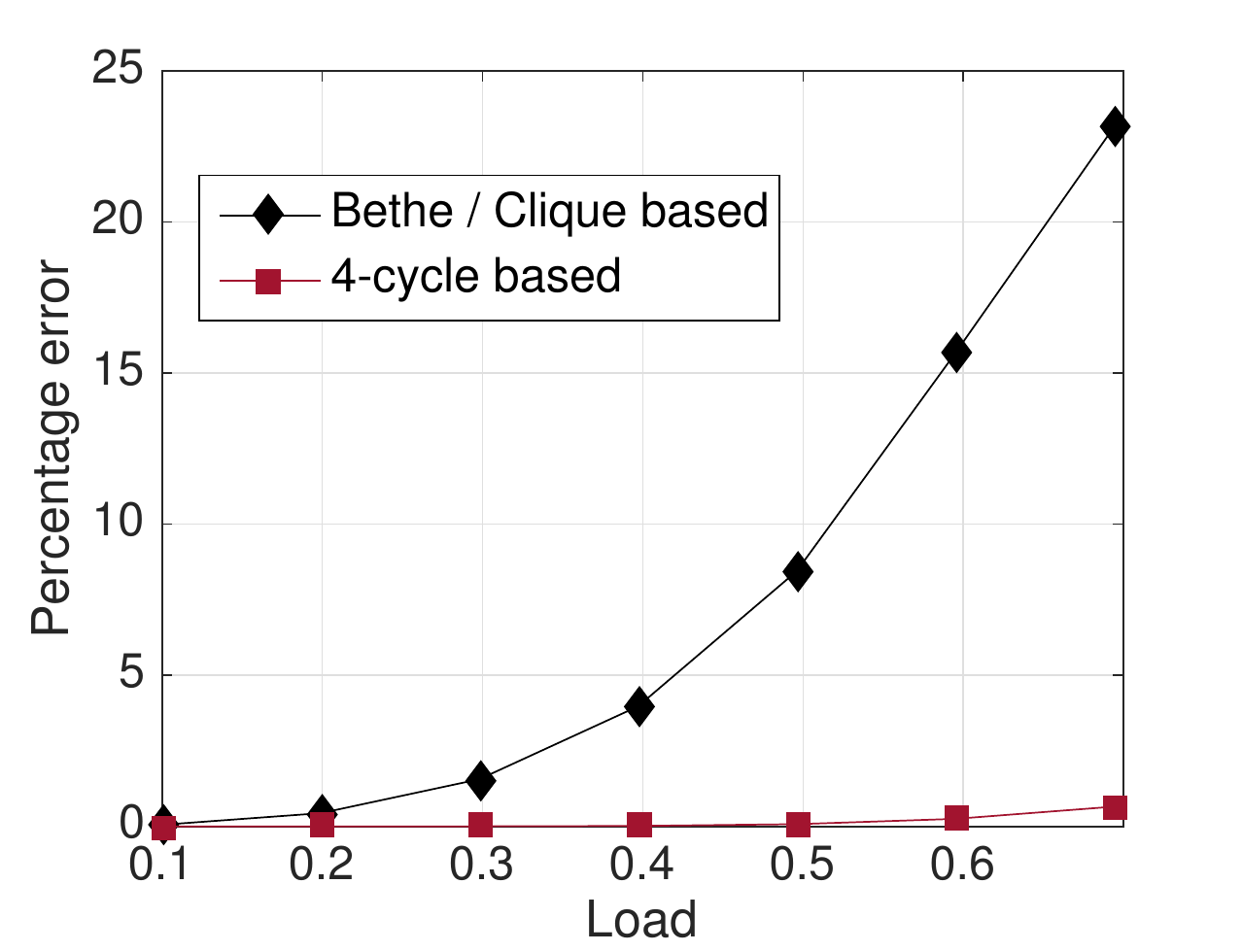} 
\caption{Error plot for grid graph}
\label{fig_grid_prac}
\end{minipage}
\begin{minipage}{.33 \textwidth}
\centering
\includegraphics[scale=0.42]{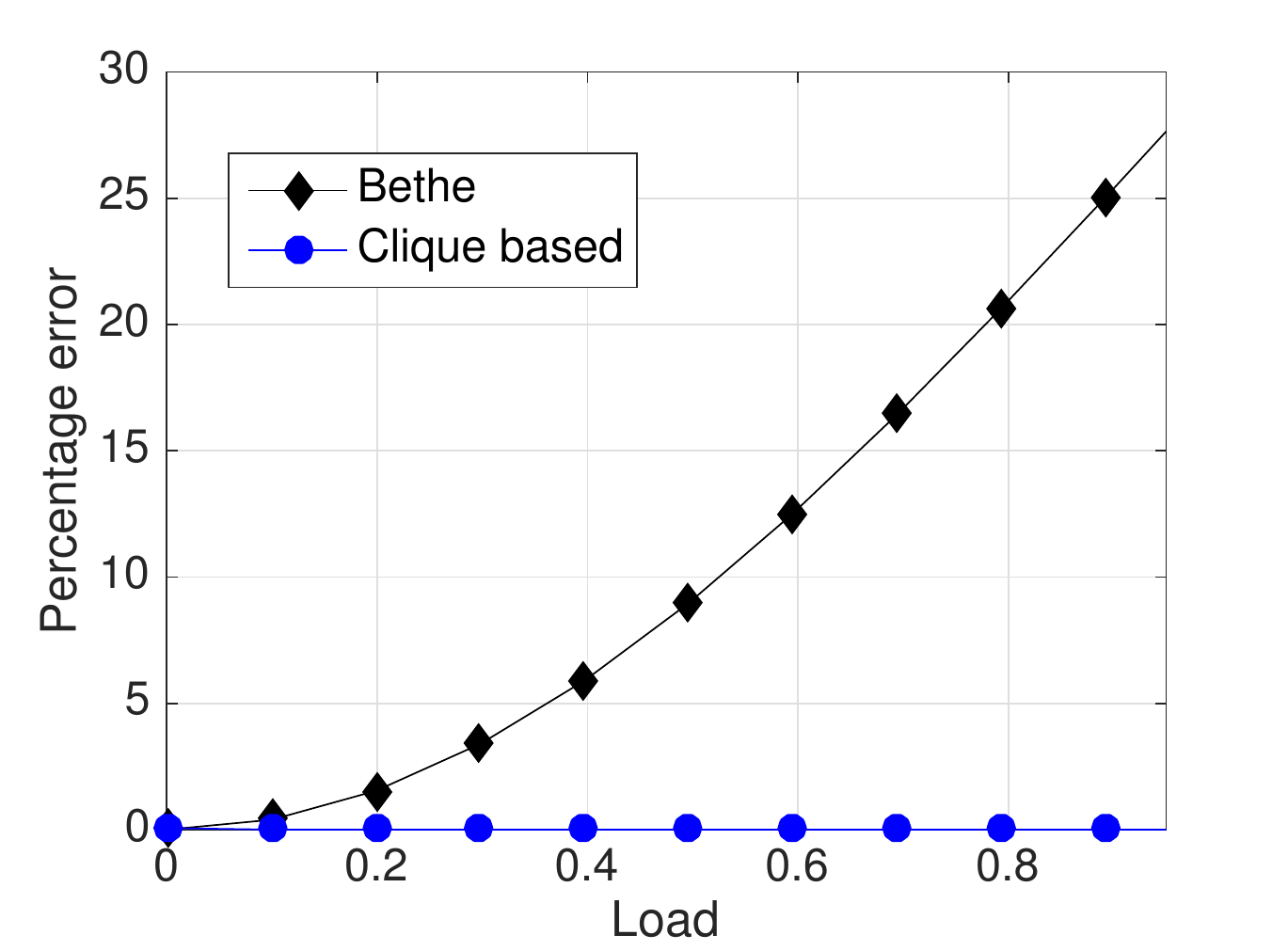} 
\caption{Error plot for chordal graph}
\label{fig_chordal_prac}
\end{minipage}
\caption*{The approximation error is plotted as function of the network load for the three topologies considered above}
\end{figure*}


\begin{figure*}
\begin{minipage}{.33 \textwidth}
\centering
\includegraphics[scale=0.39]{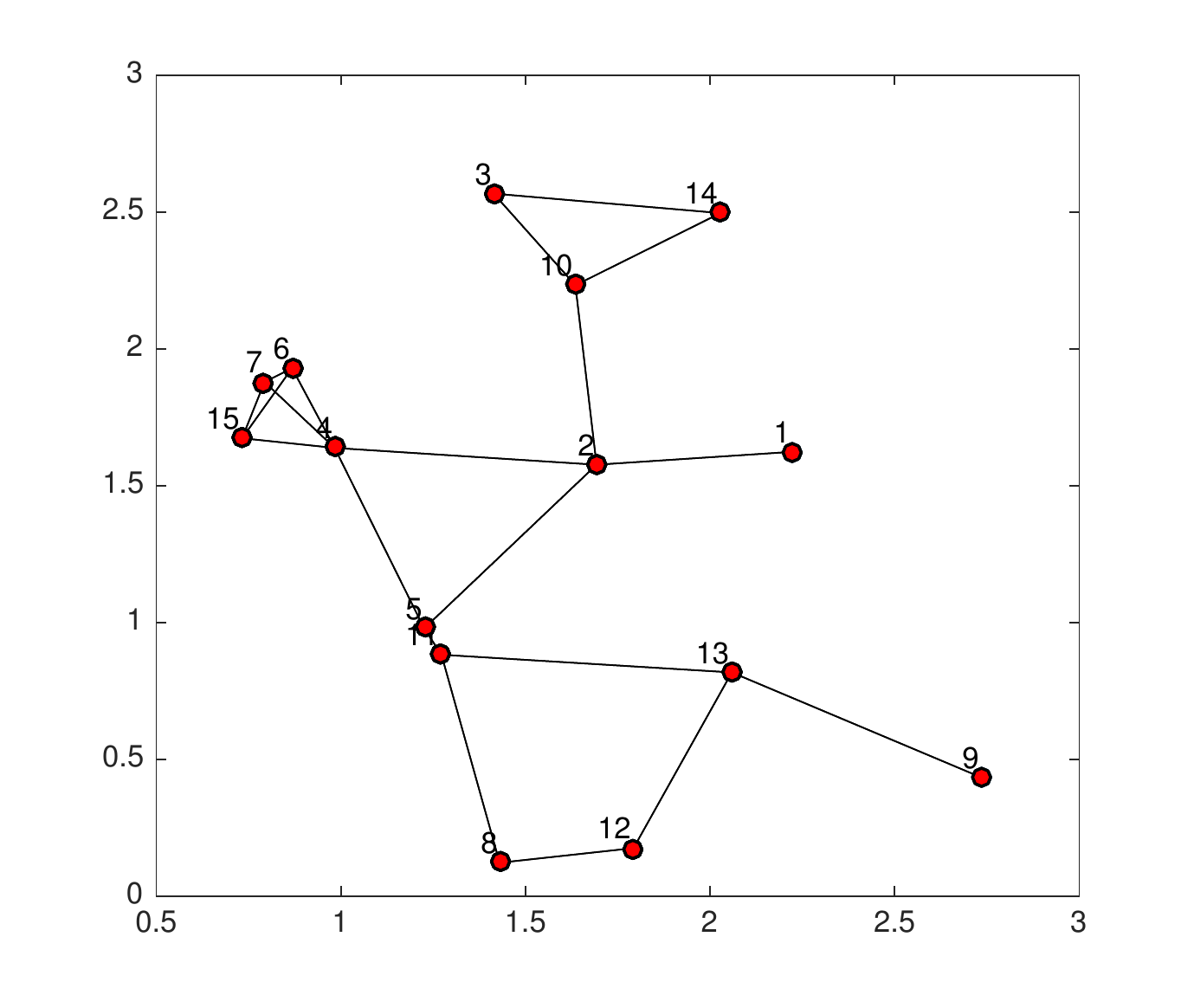} 
\caption{15-link random topology}
\label{fig_random15}
\end{minipage}
\begin{minipage}{.33 \textwidth}
\centering
\includegraphics[scale=0.465]{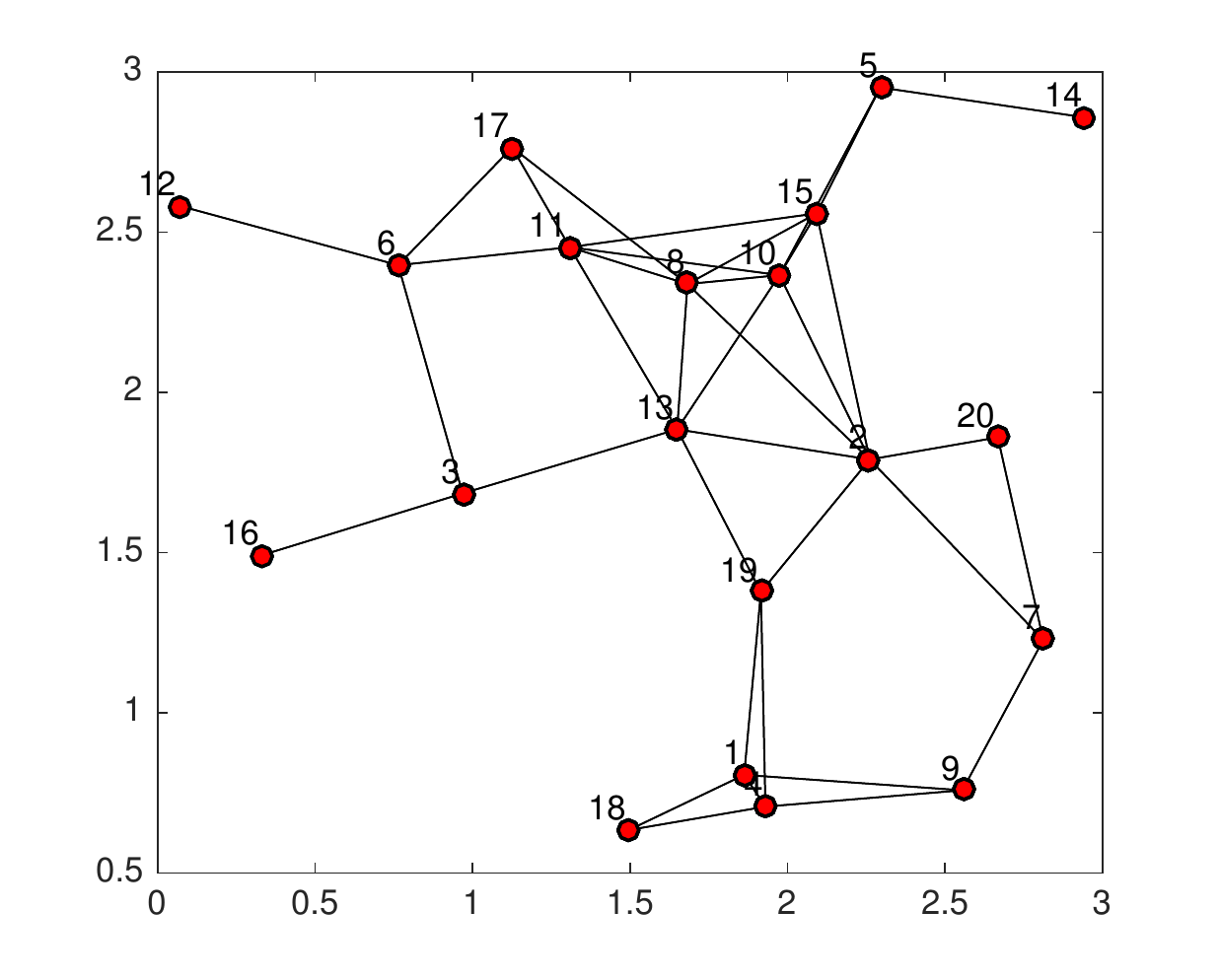} 
\caption{20-link random topology}
\label{fig_random20}
\end{minipage}
\begin{minipage}{.33 \textwidth}
\centering
\includegraphics[scale=0.43]{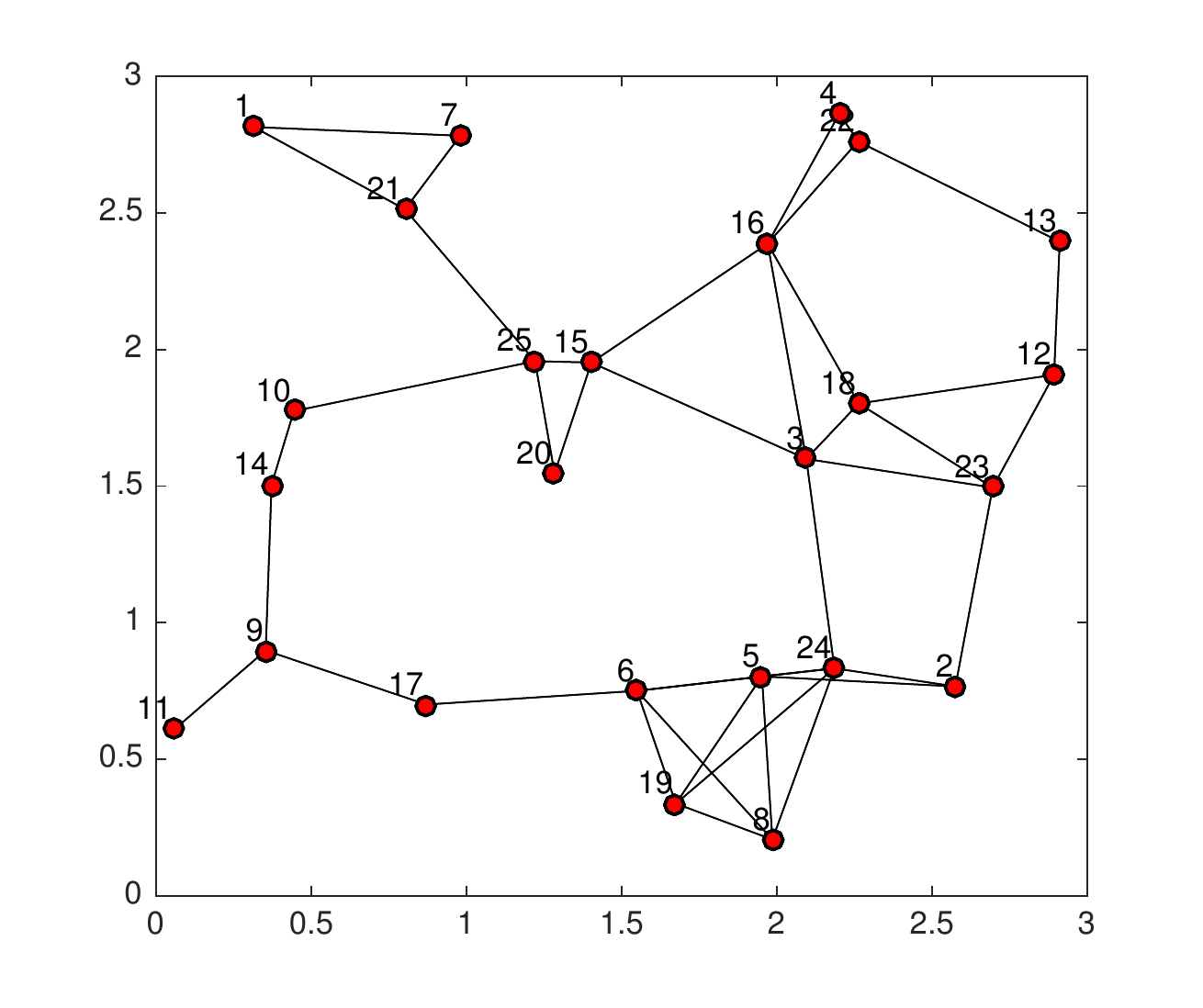} 
\caption{25-link random topology}
\label{fig_random25}
\end{minipage}
\end{figure*}

\begin{figure*}
\begin{minipage}{.33 \textwidth}
\centering
\includegraphics[scale=0.45]{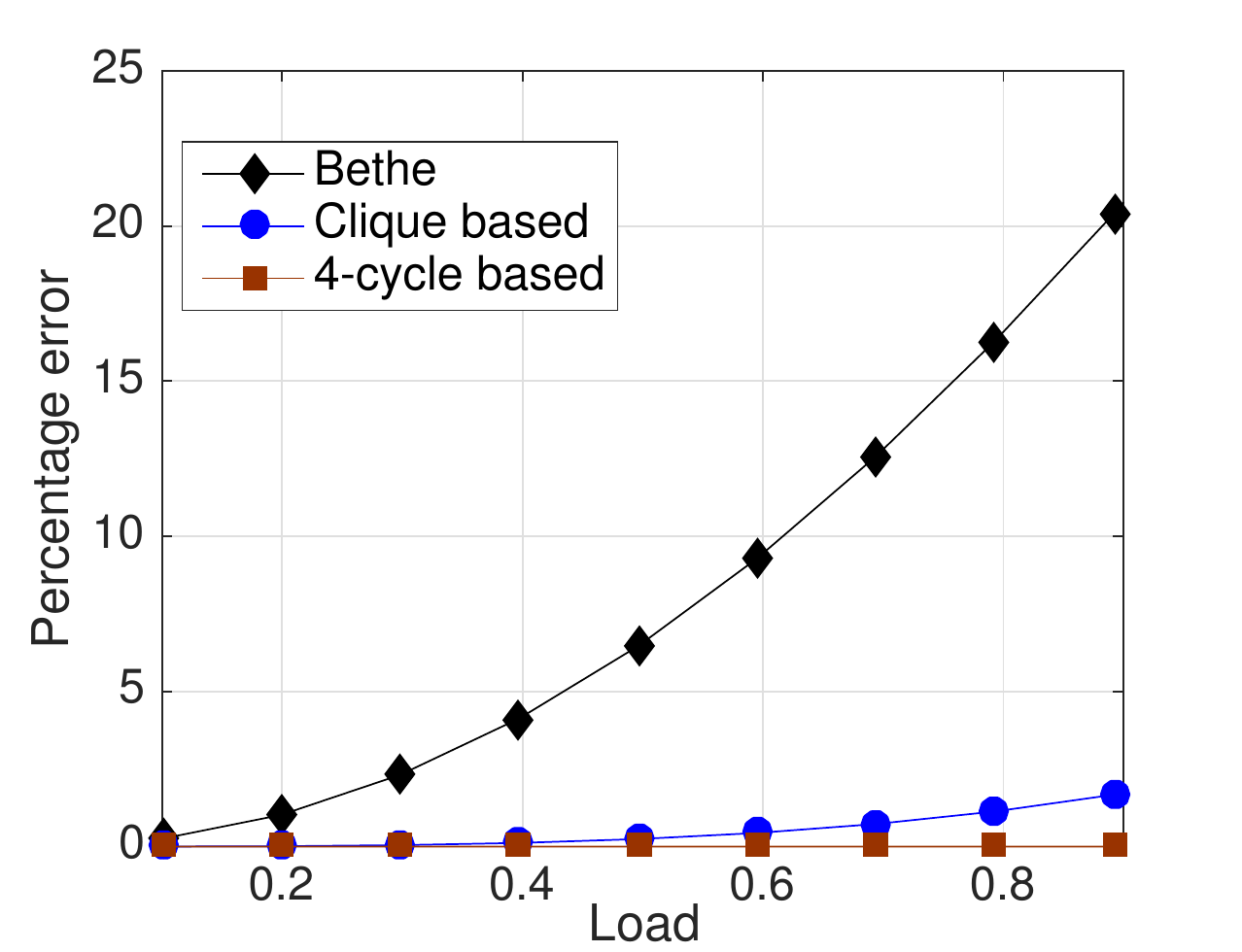} 
\caption{Error for 15-link topology}
\label{fig_r15_abs}
\end{minipage}
\begin{minipage}{.33 \textwidth}
\centering
\includegraphics[scale=0.45]{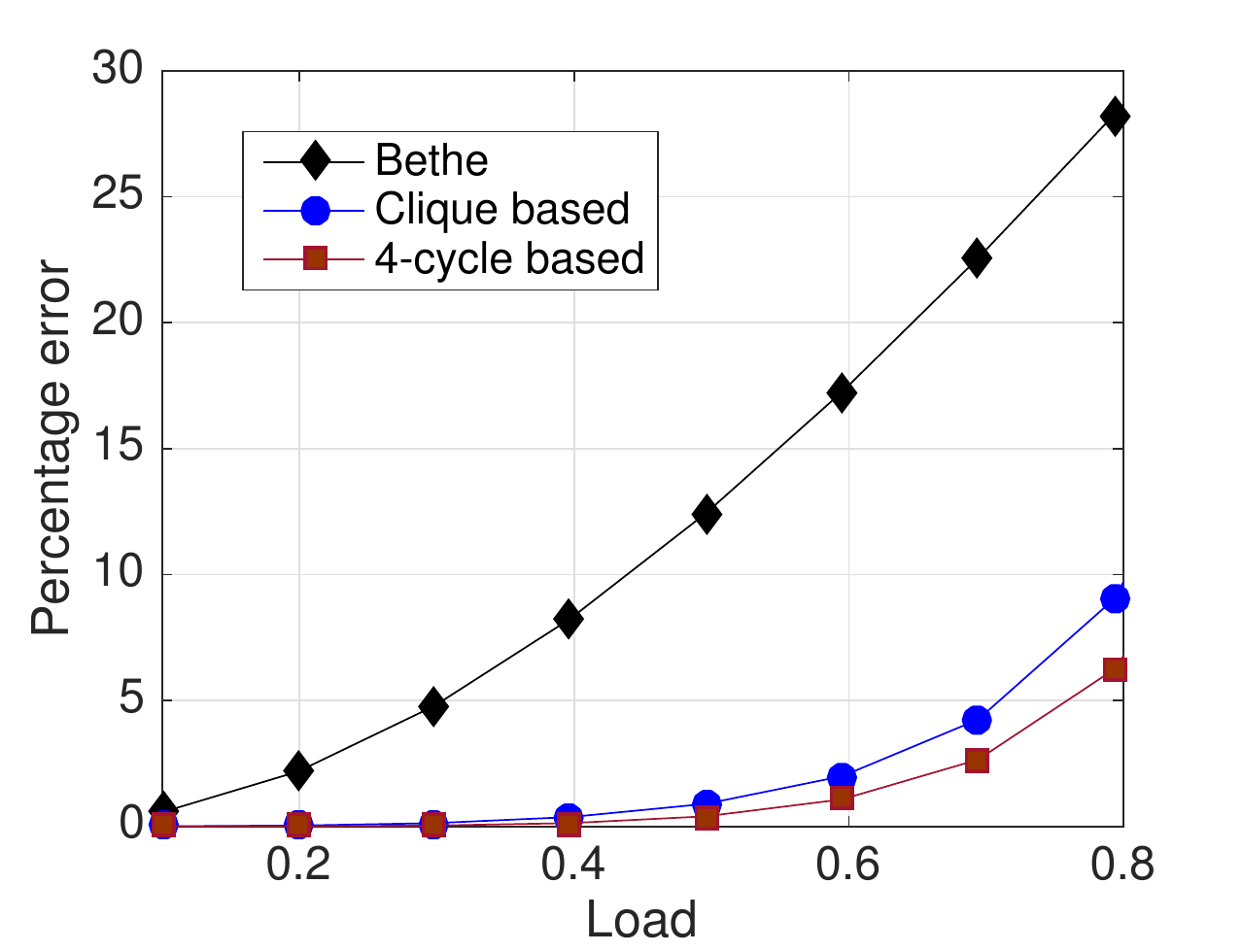} 
\caption{Error for 20-link topology}
\label{fig_r20_abs}
\end{minipage}
\begin{minipage}{.33 \textwidth}
\centering
\includegraphics[scale=0.45]{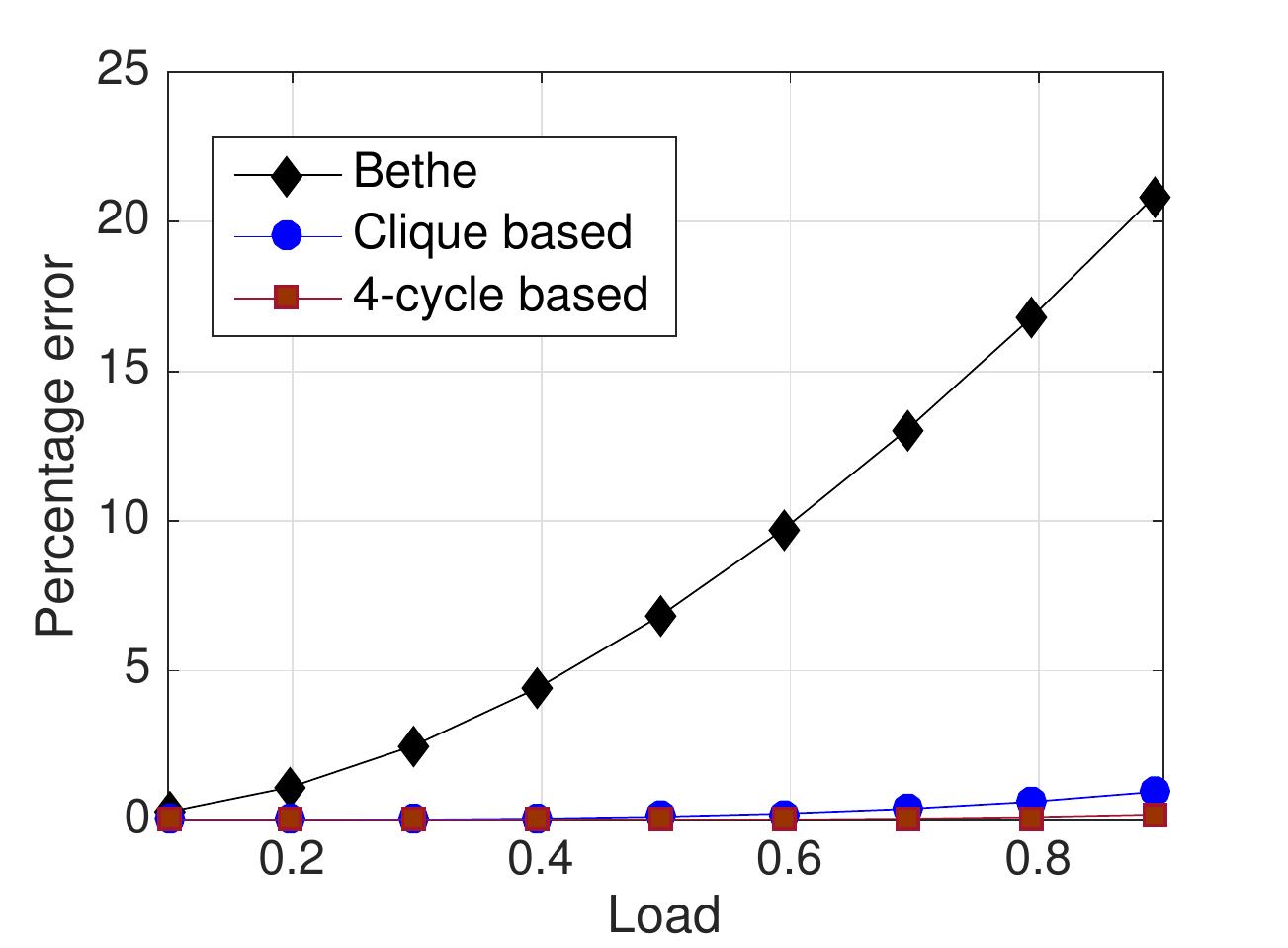} 
\caption{Error for 25-link topology}
\label{fig_r25_abs}
\end{minipage}
\caption*{The approximation error is plotted as function of the network load for 3 random topologies }
\end{figure*}

\emph{Simulation Setting:}
We first provide simulation results for complete graph, grid graph, and chordal graph (Figure \ref{fig_complete}- \ref{fig_chordal}) topologies. Further, we consider some random topologies of different sizes (Figure \ref{fig_random15} - \ref{fig_random25}). Specifically, we generate random geometric graphs on a two-dimensional square of length three. Two vertices are connected by an edge if they are within a distance of $0.8$. We consider symmetric service rate requirements for all the links\footnote{There is no specific reason to choose these parameters. The performance of the algorithms is qualitatively similar for other choices of parameters too.}.
 
 \subsection*{Approximation error}
We define the approximation error $e(s^t)$, as the maximum deviation from the required service rate. In particular, for a given target service rate vector $s^t=[s_i^t]_{i=1}^N$, $e(s^t)=\max\limits_{i}  \;|s^t_i - s^a_i| , $ where, $s^a=[s_i^a]_{i=1}^N$ are the service rates supported by using the approximated fugacities $\{\vt\}_{i=1}^N$. We vary the load (the ratio of the required service rate to the maximum permissible service rate) of the network by increasing the required service rates, and plot the percentage error. We compared the accuracy of our algorithms with the existing Bethe approximation based algorithm \cite{bethe_jshin}. 

\emph{Chordal graph:}
As proved in Theorem \ref{thm_chordal}, our algorithm is exact if the underlying conflict graph is chordal. This result is verified using a complete graph topology (which is trivially a chordal graph) and another chordal graph shown in Figure \ref{fig_chordal}. As shown in Figures \ref{fig_complete_prac}, \ref{fig_chordal_prac}, the Bethe approximation based algorithm incurs an error of about $20$ percent when the network is operated at the maximum capacity, while our Clique based algorithm is exact. 

\emph{Grid graph:} For the grid graph, the error is plotted in Figure \ref{fig_grid_prac}. Since, there are no cliques other than the trivial edges and vertices, the Bethe approximation and the Clique based algorithm will coincide. Both these algorithms incur an error of about $22$ percent at a load of $0.7$. However, our  4-cycle based algorithm is accurate with an error of about $1$ to $2$ percent.

\emph{Random graphs:} We considered three random graphs of sizes 15, 20 and 25 as shown in Figure \ref{fig_random15} - \ref{fig_random25}. We plotted the corresponding approximation errors in Figure \ref{fig_r15_abs} - \ref{fig_r25_abs}. It can be observed that the proposed algorithms perform significantly better than the Bethe approximation. In particular, for the twenty link topology at a load of $0.8$, the Bethe approximation has an error of about 28 percent, while our clique based and 4-cycle based algorithms have an error of about 9 and 7 percent respectively. Further, for the 15, 25 link topologies, when the Bethe approximation has an error of about 20 percent, the clique based approach incurs less than 2 percent error. Moreover, the 4-cycle based approach is almost exact with error close to zero. 

%
To present the average behaviour of the error, we considered thirty randomly generated topologies of size $20$, and observed the error when the required service rates are set to a load of $0.8$. We have presented the average error obtained across these topologies. As observed from Table \ref{table_error}, our algorithms have a significantly better error performance.

\begin{table}
\begin{center}

  \begin{tabular}{|c | c | }
  \hline
  Algorithm &  Average  \\  
     & error $\%$ \\ 
    \hline
Bethe  &  25.63 \\   
Clique-based & 2.78  \\
4-cycle based &1.83  \\
    \hline
     \end{tabular}
 \caption{The results obtained by simulating random topologies of size 20 at a load of $0.8$.}
  \label{table_error}
  \end{center}
 
   \end{table}


\noindent \emph{Remark on complexity}: The Bethe approximation \cite{bethe_jshin} uses only the service rates of the neighbours to compute the fugacities, and hence the error is significantly large. Our algorithms result in better error performance at the additional cost of obtaining the local topology information. Specifically, our clique-based algorithm requires the topology of the one-hop neighbourhood to compute the clique regions. The 4-cycle based algorithm requires the information of two-hop neighbourhood topology to compute the 4-cycle regions. Since the complexity of our algorithms just depend on the size of the local neighbourhood, the complexity does not scale with the network size, when the average density of the network is constant. On the other hand, stochastic gradient descent based algorithms \cite{libin} that converge to the exact fugacities, incur an exponentially slow convergence rate in the size of the network. 

\section{Conclusions}
The problem of computing the optimal fugacities for Gibbs sampling based CSMA algorithms is NP-hard. In this work, we derived estimates of the fugacities by using a framework called the \emph{regional free energy approximations}. Specifically, we derived explicit expressions for approximate fugacities corresponding to a given feasible service rate vector. We further proved that our approximate fugacities are exact for the class of \emph{chordal} graphs.  A distinguishing feature of our work is that the regional approximations that we proposed are tailored to conflict graphs with small cycles, which is a typical characteristic of wireless networks. Numerical results indicate that the fugacities obtained by the proposed methods are quite accurate, and significantly outperform the existing Bethe approximation based techniques.

%

\section{Proofs} \label{sec_proofs}
\subsection{Proof of Theorem \ref{thm_step1}} \label{proof_thm_step1}
%

The proof outline is as follows:
\begin{enumerate}
\item We define a Lagrangian function and characterize the stationary points of the RFE.
\item We derive the Lagrangian for the optimization problem \eqref{eq_opt_step1}.
\item We compare these two Lagrangian functions to prove the property (P2).
\end{enumerate}
\emph{Remark:} In the proof, $r \subset q$ denotes that $r$ is a \emph{strict} subset of $q$. The same is applicable for superset.

Let $\{b_r(\x_r)\}$ be a set of locally consistent regional distributions. Then the average energy \eqref{eq_averageenergy} can be computed as
\begin{align*}
U_{\Re}(\{b_r\}; \v)&= - \sum_{r \in {\Re}} c_r \E_{b_r}\Big[\sum_{j \in r} v_j x_j\Big], \\
&=-\sum_{r \in {\Re}} c_r \sum_{j \in r} v_j \E_{b_r}[x_j], \\
&=-\sum_{r \in {\Re}} c_r \sum_{j \in r} v_j \Big(\sum_{\x_r : x_j=1} b_r(\x_r)\Big), \\
&\overset{(a)}{=} - \sum_{r \in {\Re}} c_r \sum_{j \in r} v_j b_j(1)  , \\
&= -\sum_{j=1}^N \Big(\sum_{\{r \in \Re \;|\; j \in r\}} c_r \Big) v_j b_j(1),\\
&\overset{(b)}{=} -\sum_{j=1}^N v_j b_j(1) \numberthis \label{eq_u}.
\end{align*}
Here the equality (a) follows from the local consistency condition of $\{b_r(\x_r)\}$, (b) follows from the property of the counting numbers \eqref{eq_counting_general}. Substituting \eqref{eq_u} in the definition of the RFE \eqref{eq_RFE}, we obtain
\begin{align}
F_{\Re} \left(\{b_r\}; \v \right) &= -\sum_{j=1}^N v_j b_j(1) + \sum_{r \in \Re} c_r  \sum_{\x_r \in \I_r}b_r(\x_r) \ln b_r(\x_r). \label{eq_rfe_new}
\end{align}

Now, to derive the conditions for stationary points of RFE, we need to define a Lagrangian $\L$ for the RFE \eqref{eq_rfe_new}, which enforces the local consistency constraints. To that end, for every $r,q \in \Re$ such that $r \subset q$, we use the Lagrange multipliers $\l_{qr}(\x_r)$ to enforce the constraints
\begin{align*}
\sum_{\x_q \setminus \x_r} b_q(\x_q) = b_r(\x_r), \; \x_r \in \I_r, \text{ and }r, q \in \Re \text{ such that } r \subset q.
\end{align*}
Further, to enforce the constraint $\sum_{\x_r \in \I_r} b_r(\x_r) =1$ for the regional distribution at a region $r$, we use the Lagrange multiplier $\gamma_r$. The resulting Lagrangian is
\begin{align}
\begin{split}
&\L\left(\{b_r\}, \{\l_{qr}(\x_r)\}, \{\gamma_r\}\right) \\
&=F_{\Re}(\{b_r\}; \v) +\sum_{r \in \Re} \gamma_r \left( \sum_{\x_r \in \I_r} b_r(\x_r) -1\right)\\
&\; \; +\sum_{\{r,q \in \Re | r \subset q\}} \sum_{\x_r \in \I_r} \l_{qr}(\x_r) \left(\sum_{\x_q \setminus \x_r} b_q(\x_q) - b_r(\x_r) \right).  
\end{split} \label{eq_lag}
\end{align}
By setting the partial derivative of $\L\left(\{b_r\}, \{\l_{qr}(\x_r)\}, \{\gamma_r\}\right)$ with respect to $\{b_r(\x_r)\}$ to zero, we obtain the conditions for stationary points \eqref{eq_i1}-\eqref{eq_r_general}. 
Specifically, setting the partial derivatives of $\L\left(\{b_r\}, \{\l_{qr}(\x_r)\}, \{\gamma_r\}\right)$ with respect to the singleton distributions $b_i(1)$, and $b_i(0)$ to zero, gives
\begin{align}
-v_i + c_i \left(1+ \ln b_i(1) \right) + \gamma_i - \sum_{\{q \in \Re | \{i\} \subset q\}} \l_{qi}(1)=0, \; i \in \N, \label{eq_i1} \\ 
c_i \left(1+ \ln b_i(0) \right) + \gamma_i - \sum_{\{q \in \Re | \{i\} \subset q\}} \l_{qi}(0)=0, \; i \in \N. \label{eq_i0}
\end{align}
Similarly, for a region $r\in \Re$ which is not a singleton region, setting the partial derivative of $\L\left(\{b_r\}, \{\l_{qr}(\x_r)\}, \{\gamma_r\}\right)$ with respect to $b_r(\x_r)$ to zero, we obtain
\begin{align}
\begin{split}
&c_r \left(1+ \ln b_r(\x_r) \right) + \gamma_r  -\sum_{\{q \in \Re | r \subset q\}} \l_{qr}(\x_r) \\
&+ \sum_{\{p \in \Re | p \subset r\}} \l_{rp}(\x_p) =0, \; \;\;\forall \x_r \in \I_r, r \in \Re^{'},
\end{split} \label{eq_r_general}
\end{align}
where $\Re^{'}$ denote the set of all the regions $r \in \Re$ except the singleton regions.

Now, we consider the optimization problem \eqref{eq_opt_step1}, and define the corresponding Lagrangian.  Firstly, in the optimization problem \eqref{eq_opt_step1}, observe that the constraints \eqref{eq_bi_si} related to the singleton distributions $\{b_i\}$  can be absorbed into the local consistency constraints \eqref{eq_consistency}. In particular, the constraints 
\begin{align*}
\sum_{x_q \setminus \{x_i\}} b_q (x_q) &= b_i(x_i), \; \; x_i \in \{0,1\}, \text{ and } q \in \Re \text{ s.t. } \{i\} \subset q,\\
b_i(1)&=s_i; \;b_i(1)+b_i(0)=1 ,
\end{align*}
can be absorbed into the local consistency constraints \eqref{eq_absorb3}-\eqref{eq_absorb4} as follows:
\begin{align}
\sum_{x_q : x_i=1} b_q (x_q) &= s_i, \; \; q \in \Re \text{ s.t. } \{i\} \subset q, \label{eq_absorb3}\\
\sum_{x_q : x_i=0} b_q (x_q) &= 1-s_i, \; \; q \in \Re \text{ s.t. } \{i\} \subset q. \label{eq_absorb4}
\end{align}
After this modification of the constraints, the optimization problem \eqref{eq_opt_step1} will not contain the variables corresponding to $\{b_i\}$. In other words, it contains only $\{b_r\}_{r \in \Re^{'}}$, where $\Re^{'}$ denotes the set of regions which includes all the regions $r \in \Re$ except for the singleton regions. Note that when we simply write $\{b_r\}$, it refers to $\{b_r\}_{r \in \Re}$, \ie, the set of all the regional distributions including the singleton distributions.

Now, we define the Lagrangian $\G$ for the optimization problem \eqref{eq_opt_step1}. While enforcing the constraints, we replace one of its constraints, namely the constraint \eqref{eq_bi_si} with the equivalent modified constraints \eqref{eq_absorb3}-\eqref{eq_absorb4}.
\begin{align}
\begin{split}
&\G\left(\{b_r\}_{r \in \Re^{'}}, \{\l_{qr}(\x_r)\}, \{\gamma_r\}_{r \in \Re^{'}}\right) \\
&=\sum_{r \in \Re^{'}} c_r  \sum_{\x_r \in \I_r}b_r(\x_r) \ln b_r(\x_r) +\sum_{r \in \Re^{'}} \gamma_r \Big( \sum_{\x_r \in \I_r} b_r(\x_r) -1\Big)\\
&\; \; +\sum_{\{r,q \in \Re^{'} | r \subset q\}} \sum_{\x_r \in \I_r} \l_{qr}(\x_r) \left(\sum_{\x_q \setminus \x_r} b_q(\x_q) - b_r(\x_r) \right) \\
&\; \;+\sum_{q \in \Re^{'}} \sum_{\{i \;| \{i\} \subset q\}} \l_{qi}(1) \left(\sum_{\x_q : x_i=1} b_q(\x_q) - s_i \right) \\
&\; \;+\sum_{q \in \Re^{'}} \sum_{\{i \;| \{i\} \subset q\}}  \l_{qi}(0) \left(\sum_{\x_q : x_i=0} b_q(\x_q) - (1-s_i) \right). 
\end{split} \label{eq_lag_h}
\end{align}
Here,  the Lagrange multipliers $\{\l_{qr}(\x_r)\}$ enforce the local consistency constraints \eqref{eq_consistency}, and $\{\gamma_r\}_{r \in \Re^{'}}$ enforce the normalization constraints. For some $r \in \Re^{'}$, setting the partial derivative of $\G\left(\{b_r\}_{r \in \Re^{'}}, \{\l_{qr}(\x_r)\}, \{\gamma_r\}_{r \in \Re^{'}}\right) $ with respect to $b_r(\x_r)$ to zero, we obtain

\begin{align}
\begin{split}
&c_r \left(1+ \ln b_r(\x_r) \right) + \gamma_r  -\sum_{\{q \in \Re^{'} | r \subset q\}} \l_{qr}(\x_r) \\
&+ \sum_{\{p \in \Re^{'} | p \subset r\}} \l_{rp}(\x_p) + \sum_{\{i \;| \{i\} \subset r\}} \l_{ri}(x_i) =0, \;  r \in \Re^{'}.
\end{split} \label{eq_r_h}
\end{align}
Observe that the sum of the last two summations in \eqref{eq_r_h} is equal to $ \sum_{\{p \in \Re | p \subset r\}} \l_{rp}(\x_p)$. Hence, \eqref{eq_r_h} is same as \eqref{eq_r_general}. In other words, we have  
\begin{align}
\begin{split}
&\frac{\partial \L\left(\{b_r\}, \{\l_{qr}(\x_r)\}, \{\gamma_r\}\right) }{\partial b_r(\x_r)} \\
&=\frac{ \partial \G\left(\{b_r\}_{r \in \Re^{'}}, \{\l_{qr}(\x_r)\}, \{\gamma_r\}_{r \in \Re^{'}}\right)}{\partial b_r(\x_r)}, \forall \x_r \in \I_r, r \in \Re^{'},\label{eq_lag_eq}
\end{split}
\end{align}
where $\L\left(\{b_r\}, \{\l_{qr}(\x_r)\}, \{\gamma_r\}\right)$ \eqref{eq_lag} is the Lagrangian corresponding to the RFE. 

Now, let $\{b_r^*\}_{r \in \Re^{'}}$ correspond to a local optimal point of the optimization problem \eqref{eq_opt_step1}. Further define
\begin{align*}
b_i^*(1)&=s_i, \; \; \forall i \in \N,\\
b_i^*(0)&=1-s_i,\; \; \forall i \in \N,
\end{align*}
and extend $\{b_r^*\}_{r \in \Re^{'}}$ to $\{b_r^*\}$. Now, we want to show that $\{b_r^*\}$ is a stationary point of the RFE \eqref{eq_rfe_new} for some fugacities. For that, we have to find some fugacities $\{\tilde{v}_i\}$, and Lagrange multipliers $\left(\{\l_{qr}^*(\x_r)\}, \{\gamma^*_r\}\right)$ such that  $\left(\{b_r^*\}, \{\l_{qr}^*(\x_r)\}, \{\gamma_r^*\}\right)$ satisfies the stationarity conditions \eqref{eq_i1} - \eqref{eq_r_general} of the RFE. 

Using the fact that $\{b_r^*\}_{r \in \Re^{'}}$ is a stationary point (local optimizer) of the optimization problem \eqref{eq_opt_step1}, we are guaranteed to have Lagrange multipliers $\left(\{\l_{qr}^*(\x_r)\}, \{\gamma^*_r\}_{r \in \Re^{'}}\right)$ such that $\left(\{b_r^*\}_{r \in \Re^{'}}, \{\l_{qr}^*(\x_r)\}, \{\gamma_r^*\}_{r \in \Re^{'}}\right)$ satisfies the stationarity conditions \eqref{eq_r_h} of the Lagrangian  $\G\left(\{b_r\}_{r \in \Re^{'}}, \{\l_{qr}(\x_r)\}, \{\gamma_r\}_{r \in \Re^{'}}\right)$. Further due to the equality of the stationary conditions shown in \eqref{eq_lag_eq}, we conclude that $\left(\{b_r^*\}_{r \in \Re^{'}}, \{\l_{qr}^*(\x_r)\}, \{\gamma_r^*\}_{r \in \Re^{'}}\right)$  satisfies \eqref{eq_r_general}.

Now, inspired by \eqref{eq_i1}, \eqref{eq_i0}, we define
\begin{align}
\gamma_i^* &= \sum_{q \in \Re | \{i\} \subset q} \l_{qi}^*(0)-c_i \left(1+ \ln b_i^*(0) \right) ,\; i \in \N, \label{eq_def_gamma}\\
\tilde{v_i} &= c_i \left(1+ \ln b_i^*(1) \right) + \gamma_i^* - \sum_{q \in \Re | \{i\} \subset q} \l^*_{qi}(1), \; i \in \N. \label{eq_def_v}
\end{align}
Hence, in the light of \eqref{eq_lag_eq}, and the definition of $\gamma_i^*, \tilde{v_i}$ \eqref{eq_def_gamma} - \eqref{eq_def_v}, it follows that $\left(\{b_r^*\}, \{\l_{qr}^*(\x_r)\}, \{\gamma_r^*\}\right)$ satisfies all the conditions \eqref{eq_i1} - \eqref{eq_r_general} required for stationarity of $\L\left(\{b_r\}, \{\l_{qr}(\x_r)\}, \{\gamma_r\}\right)$ \eqref{eq_lag}. Hence, $\{b_r^*\}$ is a stationary point of the RFE $F_{\Re}(\{b_r\}; \tilde{\v})$ \eqref{eq_rfe_new} for the fugacities $\{\tilde{v_i}\}$ defined in \eqref{eq_def_v}. Therefore, $\{b_r^*\}$ satisfies the property $(P2)$ defined in Section \ref{sec_inverse}.

\subsection{Proof of Theorem \ref{thm_step2}} \label{proof_thm_step2}
Consider the stationarity conditions \eqref{eq_i1} - \eqref{eq_r_general} of the Lagrangian $\L\left(\{b_r\}, \{\l_{qr}(\x_r)\}, \{\gamma_r\}\right)$ \eqref{eq_lag}.
Subtracting \eqref{eq_i0} from \eqref{eq_i1} gives
\begin{align}
c_i \ln \left( \frac{b_i(1)}{b_i(0)}\right)  = v_i - \sum_{q \in \Re | \{i\} \subset q} \left(\l_{qi}(0) - \l_{qi} (1)\right). \label{eq_i}
\end{align}
Let $r \supset \{i\}$ be some non-singleton region containing $i$. Let $\x_r^i$ denote the argument $\x_r$ such that $x_i=1$, $x_j=0, \forall j \in r \setminus \{i\}$. Similarly for any $r \in \Re$, let ${\bf{0}}$ denote $\x_r$ such that $x_j=0, \forall j \in r$. Then by setting the derivative of $\L\left(\{b_r\}, \{\l_{qr}(\x_r)\}, \{\gamma_r\}\right)$ given in \eqref{eq_lag} with respect to $b_r(\x_r^i)$ to zero, we obtain
\begin{align}
\begin{split}
&c_r \left(1+ \ln b_r(\x_r^i) \right) + \gamma_r  -\sum_{\{q \in \Re | r \subset q\}} \l_{qr}(\x_r^i) \\
&+ \sum_{\{p \in \Re | p \subset r, i \in p\}} \l_{rp}(\x_p^i) + \sum_{\{p \in \Re | p \subset r, i \notin p\}} \l_{rp}({\bf{0}}) =0.
\end{split} \label{eq_r1}
\end{align}
Similarly, setting the derivative of $\L\left(\{b_r\}, \{\l_{qr}(\x_r)\}, \{\gamma_r\}\right)$ given in \eqref{eq_lag} with respect to $b_r({\bf{0}})$ to zero, we obtain
\begin{align}
\begin{split}
&c_r \left(1+ \ln b_r({\bf{0}}) \right) + \gamma_r  -\sum_{\{q \in \Re | r \subset q\}} \l_{qr}({\bf{0}}) \\
&+ \sum_{\{p \in \Re | p \subset r, i \in p\}} \l_{rp}({\bf{0}}) + \sum_{\{p \in \Re | p \subset r, i \notin p\}} \l_{rp}({\bf{0}}) =0.
\end{split} \label{eq_r0}
\end{align}
Now, subtracting \eqref{eq_r0} from \eqref{eq_r1} we obtain 
\begin{align}
\begin{split}
&c_r \ln \left( \frac{b_r(\x_r^i)}{b_r({\bf{0}})}\right) =- \sum_{\{q \in \Re | r \subset q\}} \left(\l_{qr}({\bf{0}}) - \l_{qr} (\x_r^i)\right) \\
&\; \; - \sum_{\{p \in \Re | p \subset r, i \in p\}} \left( \l_{rp}(\x_p^i) - \l_{rp}({\bf{0}})\right).
\end{split} \label{eq_r}
\end{align}
For shorthand notation, for a given $i \in V$, let us define
\begin{align*}
\beta_{qr}:=\left(\l_{qr}({\bf{0}}) - \l_{qr} (\x_r^i)\right), \; \; \forall q,r \ni i.
\end{align*}
Then \eqref{eq_r} can be written as 
\begin{align}
&c_r \ln \left( \frac{b_r(\x_r^i)}{b_r({\bf{0}})}\right) \\
&= \sum_{\{q \in \Re | i \in q\}} \Big({\i(r \supset q)} \beta_{rq} - {\i(r \subset q)} \beta_{qr} \Big), \; \forall r \supset \{i\}. \label{eq_r_new}
\end{align}

Next, let us consider the expression $\sum_{\{r \in \Re | i \in r\}}  c_r \ln \left( \frac{b_r(\x_r^i)}{b_r({\bf{0}})}\right)$ and split it as
\begin{align*}
&\sum_{\{r \in \Re | i \in r\}}  c_r \ln \left( \frac{b_r(\x_r^i)}{b_r({\bf{0}})}\right) \\
&= c_i \ln \left( \frac{b_i(1)}{b_i(0)}\right) + \sum_{\{r \in \Re | r \supset \{i\}\}} c_r \ln \left( \frac{b_r(\x_r^i)}{b_r({\bf{0}})}\right). \numberthis \label{eq_wewant}
\end{align*}
If we substitute \eqref{eq_i} and \eqref{eq_r_new} in \eqref{eq_wewant}, we obtain 
\begin{align}
&\sum_{\{r \in \Re | i \in r\}}  c_r \ln \left( \frac{b_r(\x_r^i)}{b_r({\bf{0}})}\right) \nonumber\\
&= v_i - \sum_{\{q \in \Re | \{i\} \subset q\}} \beta_{qi} \nonumber\\&
\; \;+ \sum_{\{r \in \Re | r\supset \{i\}\}} \sum_{\{q \in \Re | i \in q\}} \Big({\i(r \supset q)} \beta_{rq} - {\i(r \subset q)} \beta_{qr} \Big), \nonumber \\
&\overset{(a)}{=} v_i + \sum_{\{q \in \Re | i \in q\}} \Big({\i(\{i\} \supset q)} \beta_{iq} - {\i(\{i\} \subset q)} \beta_{qi} \Big) \nonumber\\&
\; \;+ \sum_{\{r \in \Re | r\supset \{i\}\}} \sum_{\{q \in \Re | i \in q\}} \Big({\i(r \supset q)} \beta_{rq} - {\i(r \subset q)} \beta_{qr} \Big), \nonumber \\
&= v_i  + \sum_{\{r \in \Re | i \in r\}} \sum_{\{q \in \Re | i \in q\}} \Big({\i(r \supset q)} \beta_{rq} - {\i(r \subset q)} \beta_{qr} \Big), \nonumber\\
&\overset{(b)}{=}v_i. \label{eq_last}
\end{align}
Here, the equality $(a)$ is obtained by simply rewriting the term $-\sum_{\{q \in \Re | \{i\} \subset q\}} \beta_{qi}$ into the equivalent form $\sum_{\{q \in \Re | i \in q\}} \Big({\i(\{i\} \supset q)} \beta_{iq} - {\i(\{i\} \subset q)} \beta_{qi} \Big)$. Further, the equality $(b)$ is obtained by observing that all the terms other than $v_i$ cancel out, and evaluate to zero.

Hence, from \eqref{eq_last}, we can conclude that if $\{b_r^*\}$ is a stationary point of $F_{\Re}(\{b_r\}; \v)$, which is locally consistent, we have
\begin{align*}
\exp(v_i)= \prod_{\{r \in \Re | i \in r\}} \left( \frac{b_r^*(\x_r^i)}{b_r^*({\bf{0}})}\right)^{c_r}, \; \forall i \in \N. 
\end{align*}

\subsection{Proof of Theorem \ref{thm_chordal}} \label{proof_chordal}
\begin{figure}
\begin{center}
  \hspace{-2mm}
\begin{tikzpicture}[scale=0.7]

\fill (1,2)circle(0.06cm);
\fill (3,2)circle(0.06cm);
\fill (4,2)circle(0.06cm);
\fill (1,1)circle(0.06cm);
\fill (3,1)circle(0.06cm);
\fill (4,1)circle(0.06cm);

\node at (.9,2.2) {\begin{scriptsize}4\end{scriptsize}};
\node at (1.1,.8) {\begin{scriptsize}1\end{scriptsize}};
\node at (3,2.2) {\begin{scriptsize}5\end{scriptsize}};
\node at (2.9,0.8) {\begin{scriptsize}2\end{scriptsize}};
\node at (4.1,2.2) {\begin{scriptsize}6\end{scriptsize}};
\node at (4.1,0.8) {\begin{scriptsize}3\end{scriptsize}};

\draw (1,2)-- (3,2) -- (3,1) -- (1,1) -- (1,2);  

\draw (3,2)-- (4,2) -- (4,1) -- (3,1) -- (3,2); 


;

\draw (1,1) -- (3,2); 

\draw (4,1) -- (3,2); 

\end{tikzpicture}
  \vspace{-2mm}
 \caption{An example of a chordal graph} 
  \label{fig_chordal_example}
  \end{center}
  \end{figure}
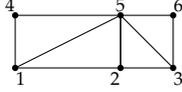
  
  \begin{figure}
  \centering
  \begin{tikzpicture}[scale=1]

\node at (2.7,2) {Valid Junction tree:};
\node at (2.7,1) {Invalid Junction tree:};

\node (rect145) at (5,2) [draw,minimum width=.25cm,minimum height=.25cm] {\begin{scriptsize}1,4,5\end{scriptsize}};
\node (rect125) at (6,2) [draw,minimum width=.25cm,minimum height=.25cm] {\begin{scriptsize}1,2,5\end{scriptsize}};
\node (rect235) at (7,2) [draw,minimum width=.25cm,minimum height=.25cm] {\begin{scriptsize}2,3,5\end{scriptsize}};
\node (rect356) at (8,2) [draw,minimum width=.25cm,minimum height=.25cm] {\begin{scriptsize}3,5,6\end{scriptsize}};

  \path (rect145) edge (rect125);
    \path (rect125) edge (rect235);
        \path (rect235) edge (rect356);
 
 \node (rect1450) at (5,1) [draw,minimum width=.25cm,minimum height=.25cm] {\begin{scriptsize}1,2,5\end{scriptsize}};
\node (rect1250) at (6,1) [draw,minimum width=.25cm,minimum height=.25cm] {\begin{scriptsize}1,4,5\end{scriptsize}};
\node (rect2350) at (7,1) [draw,minimum width=.25cm,minimum height=.25cm] {\begin{scriptsize}2,3,5\end{scriptsize}};
\node (rect3560) at (8,1) [draw,minimum width=.25cm,minimum height=.25cm] {\begin{scriptsize}3,5,6\end{scriptsize}};

  \path (rect1450) edge (rect1250);
    \path (rect1250) edge (rect2350);
        \path (rect2350) edge (rect3560);

\end{tikzpicture}
\captionof{figure}{Illustration of Junction tree representation for Figure \ref{fig_chordal_example}, with maximal cliques as regions. The top figure is a valid junction tree, while the bottom figure is not a valid junction tree.}
  \label{fig_juntree}
\end{figure}
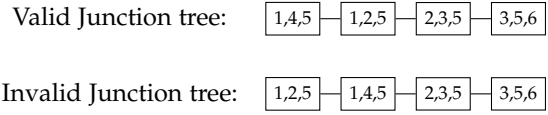
We now introduce the notion of a junction tree which is required for this proof.
\begin{definition} (Junction tree)
Let $\Re$ denote a given collection of regions, and $\Re_0$ denote the maximal regions of $\Re$. A junction tree $T = (\Re_0, \mathcal{E})$ is a tree in which the nodes correspond to the maximal regions, and the edges are such that they satisfy the following Running intersection property: For any two maximal regions $r, q \in \Re_0$, the elements in $r \cap q$ should be part of all the maximal regions on the unique path from $r$ to $q$ in the tree $T$.
\end{definition}
It is a known fact that, if we consider a chordal graph with maximal cliques as the collection of regions, then we can construct a junction tree \cite[Page 30]{book_martin}. The definition of Junction tree is illustrated with the example of a chordal graph in Figure \ref{fig_chordal_example}, \ref{fig_juntree}. In Figure \ref{fig_juntree}, the top tree is a valid junction tree, since every pair of regions satisfy the running intersection property. For example, consider the pair of regions $\{1,4,5\}$, $\{2,3,5\}$. The region $\{1,2,5\}$ which is in the path joining them, contains their intersection $\{1,5\}$. Similarly, it is easy to observe that the figure in the bottom is not a valid junction tree. For example, consider the pair of regions $\{1,2,5\}$, $\{2,3,5\}.$ The region $\{1,4,5\}$ which is in the path joining them, does not contain their intersection $\{2,5\}$, and hence violates the running intersection property.


Next, we require the following result from \cite[page 101]{book_martin}:  ``Consider a collection of regions $\Re$ that is closed under intersection. Assume that there exists a junction tree for the collection $\Re$. Then, it is known that a product form distribution $p(\x)$ of the form given in \eqref{eq_dist}, can be factorized\footnote{This factorization is presented in \cite[page 101]{book_martin}, and is referred to as Hypertree based Re-parametrization. The result is presented using the terminology of mobius function. It can be easily argued that the factorization that we use here is equivalent to that in \cite{book_martin}.} in terms of its exact regional distributions as
$p(\x)= \prod_{r \in \Re}p_r(x_r)^{c_r}.$"


As discussed earlier, there exists a junction tree representation for the maximal cliques of a chordal graph \cite[Page 30]{book_martin}. Hence, using the above result, we can factorize $p(\x)$ defined in \eqref{eq_dist} in terms of the collection of cliques $\Re$ as follows:
\begin{align}
p(\x)=\frac{1}{Z(v)} \exp\Big(\sum_j v_j x_j\Big)= \prod_{r \in \Re}p_r(x_r)^{c_r}, \; \forall \x \in \I. \label{eq_hr1}
\end{align}
Substituting $\x=\bf{0}$ (\ie, $x_j=0$ for all $j \in \N$) in \eqref{eq_hr1} gives
\begin{align*}
Z(v)^{-1} = \prod_{r \in \Re} p_r({\bf{0}})^{c_r}.
\end{align*}
Now using the observations made in \eqref{eq_reg_dist}, we obtain
\begin{align}
Z(v)^{-1} = \prod_{r \in \Re} \Big(1-\sum_{j \in r} p_j(1)\Big)^{c_r}. \label{eq_z}
\end{align}

Similarly, substituting $\x=[x_i]_{i=1}^N$ with $x_i=1$, $x_j=0$ for all $j \in \N \setminus \{i\}$ in \eqref{eq_hr1}, and using \eqref{eq_reg_dist}, we obtain
\begin{align}
\frac{\exp(v_i)}{Z(v)}  = \prod_{q \in \Re^i} (p_i(1))^{c_q} \prod_{r \in \Re \setminus \Re^i} \Big(1-\sum_{j \in r} p_j(1)\Big)^{c_r}, \label{eq_v}
\end{align}
where $\Re^i=\{r \in \Re \;|\; i \in r\}$ is the set of regions in which vertex $i$ is involved.

Now, dividing \eqref{eq_v} with \eqref{eq_z}, we obtain
\begin{align*}
\exp(v_i)=  \frac{\prod_{q \in \Re^i} (p_i(1))^{c_q}}{\prod_{r \in \Re^i} \Big(1-\sum_{j \in r} p_j(1)\Big)^{c_r}}.
\end{align*}

Further, from \eqref{eq_counting_general}, we know  that $\sum_{q\in \Re^i} c_q=1$. Hence,
\begin{align*}
\exp(v_i)= p_i(1) \prod_{r \in \Re^i} \Big(1-\sum_{j \in r} p_j(1)\Big)^{-c_r}. 
\end{align*}
Since the service rate $s_i$ is to be obtained as the marginal $p_i(1)$, the exact fugacities and the corresponding service rates are related by the above equation. It can observed that our estimate of fugacities \eqref{eq_main} follows the same relation, and hence our algorithm is exact for chordal graphs.

\bibliographystyle{IEEEtran}
\bibliography{myreferences_kikuchi}

\begin{thebibliography}{10}
\providecommand{\url}[1]{#1}
\csname url@samestyle\endcsname
\providecommand{\newblock}{\relax}
\providecommand{\bibinfo}[2]{#2}
\providecommand{\BIBentrySTDinterwordspacing}{\spaceskip=0pt\relax}
\providecommand{\BIBentryALTinterwordstretchfactor}{4}
\providecommand{\BIBentryALTinterwordspacing}{\spaceskip=\fontdimen2\font plus
\BIBentryALTinterwordstretchfactor\fontdimen3\font minus
  \fontdimen4\font\relax}
\providecommand{\BIBforeignlanguage}[2]{{%
\expandafter\ifx\csname l@#1\endcsname\relax
\typeout{** WARNING: IEEEtran.bst: No hyphenation pattern has been}%
\typeout{** loaded for the language `#1'. Using the pattern for}%
\typeout{** the default language instead.}%
\else
\language=\csname l@#1\endcsname
\fi
#2}}
\providecommand{\BIBdecl}{\relax}
\BIBdecl

\bibitem{spcom_kikuchi}
P.~S. Swamy, V.~P.~K. Bellam, R.~K. Ganti, and K.~Jagannathan, ``Efficient
  {CSMA} based on kikuchi approximation,'' in \emph{Signal Processing and
  Communications (SPCOM), 2016 International Conference on}.\hskip 1em plus
  0.5em minus 0.4em\relax IEEE, 2016, pp. 1--5.

\bibitem{libin}
L.~Jiang and J.~Walrand, ``A distributed {CSMA} algorithm for throughput and
  utility maximization in wireless networks,'' \emph{IEEE/ACM Transactions on
  Networking (TON)}, vol.~18, no.~3, pp. 960--972, 2010.

\bibitem{qcsma}
J.~Ni, B.~Tan, and R.~Srikant, ``Q-{CSMA}: Queue-length-based {CSMA/CA}
  algorithms for achieving maximum throughput and low delay in wireless
  networks,'' \emph{Networking, IEEE/ACM Transactions on}, vol.~20, no.~3, pp.
  825--836, 2012.

\bibitem{sinr_mimo}
D.~Qian, D.~Zheng, J.~Zhang, and N.~Shroff, ``{CSMA}-based distributed
  scheduling in multi-hop {MIMO} networks under {SINR} model,'' in
  \emph{INFOCOM, 2010 Proceedings IEEE}.\hskip 1em plus 0.5em minus 0.4em\relax
  IEEE, 2010, pp. 1--9.

\bibitem{bremaud}
P.~Bremaud, \emph{Markov chains: Gibbs fields, {Monte Carlo} simulation, and
  queues}.\hskip 1em plus 0.5em minus 0.4em\relax springer, 1999, vol.~31.

\bibitem{fast_mixing}
L.~Jiang, M.~Leconte, J.~Ni, R.~Srikant, and J.~Walrand, ``Fast mixing of
  parallel {G}lauber dynamics and low-delay {CSMA} scheduling,''
  \emph{Information Theory, IEEE Transactions on}, vol.~58, no.~10, pp.
  6541--6555, 2012.

\bibitem{yedidia}
J.~S. Yedidia, W.~T. Freeman, and Y.~Weiss, ``Constructing free-energy
  approximations and generalized belief propagation algorithms,''
  \emph{Information Theory, IEEE Transactions on}, vol.~51, no.~7, pp.
  2282--2312, 2005.

\bibitem{baccelli_book}
F.~Baccelli and B.~Blaszczyszyn, ``Stochastic geometry and wireless networks,
  {Volume II-Applications},'' 2009.

\bibitem{bethe_jshin}
S.-Y. Yun, J.~Shin, and Y.~Yi, ``{CSMA} using the {Bethe} approximation for
  utility maximization,'' in \emph{Information Theory Proceedings (ISIT), 2013
  IEEE International Symposium on}.\hskip 1em plus 0.5em minus 0.4em\relax
  IEEE, 2013, pp. 206--210.

\bibitem{allerton_bethe}
P.~S. Swamy, R.~K. Ganti, and K.~Jagannathan, ``Adaptive {CSMA} under the
  {SINR} model: Fast convergence through local {Gibbs} otpimization,'' in
  \emph{Communication, Control, and Computing (Allerton), 2015 53rd Annual
  Allerton Conference on}.\hskip 1em plus 0.5em minus 0.4em\relax IEEE, 2015.

\bibitem{bp_csma}
C.~H. Kai and S.~C. Liew, ``Applications of belief propagation in {CSMA}
  wireless networks,'' \emph{Networking, IEEE/ACM Transactions on}, vol.~20,
  no.~4, pp. 1276--1289, 2012.

\bibitem{chordal}
B.~Van~Houdt, ``Explicit back-off rates for achieving target throughputs in
  csma/ca networks,'' \emph{arXiv preprint arXiv:1602.08290}, 2016.

\bibitem{book_martin}
M.~J. Wainwright and M.~I. Jordan, ``Graphical models, exponential families,
  and variational inference,'' \emph{Foundations and Trends{\textregistered} in
  Machine Learning}, vol.~1, no. 1-2, pp. 1--305, 2008.

\bibitem{v_anant_kik}
P.~Pakzad and V.~Anantharam, ``Estimation and marginalization using the kikuchi
  approximation methods,'' \emph{Neural Computation}, vol.~17, no.~8, pp.
  1836--1873, 2005.

\bibitem{cliques_complexity}
E.~Tomita, A.~Tanaka, and H.~Takahashi, ``The worst-case time complexity for
  generating all maximal cliques,'' in \emph{Computing and
  Combinatorics}.\hskip 1em plus 0.5em minus 0.4em\relax Springer, 2004, pp.
  161--170.

\bibitem{libin_book}
L.~Jiang and J.~Walrand, ``Scheduling and congestion control for wireless and
  processing networks,'' \emph{Synthesis Lectures on Communication Networks},
  vol.~3, no.~1, pp. 1--156, 2010.

\bibitem{marbach}
M.~Lotfinezhad and P.~Marbach, ``Throughput-optimal random access with
  order-optimal delay,'' in \emph{INFOCOM, 2011 Proceedings IEEE}.\hskip 1em
  plus 0.5em minus 0.4em\relax IEEE, 2011, pp. 2867--2875.

\end{thebibliography}

\begin{IEEEbiography}[{\includegraphics[width=1in,height=1in]{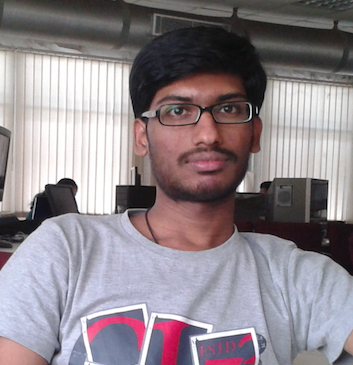}}]{Peruru Subrahmanya Swamy} obtained his B. Tech. degree in Electronics and Communication Engineering from Sastra University, Thanjavur, India in 2011. He is currently a graduate student in the Department of Electrical Engineering, IIT Madras, Chennai, India. He worked as a Project Associate in Analog Devices DSP Learning Centre, IIT Madras during 2011-2012. His research interests lie in communication networks, optimization, and stochastic geometry.
\end{IEEEbiography}

\begin{IEEEbiography}[{\includegraphics[width=1in,height=1in]{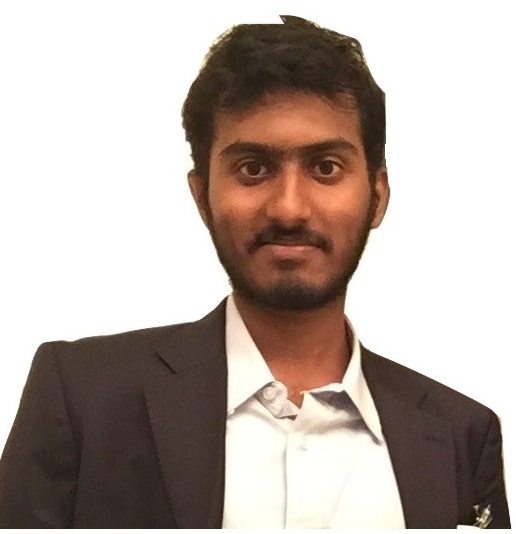}}]{Venkata Pavan Kumar Bellam} obtained his B. Tech. degree in Electrical Engineering from Indian Institute of Technology, Madras in 2016. He is currently pursuing his M.S. degree in Computer Engineering from Virginia Tech. He worked as a summer intern at Samsung Research and Development Institute, Bengaluru in 2015. His research interests lie in communication networks, optimization, machine learning and artificial intelligence.
\end{IEEEbiography}

\begin{IEEEbiography}[{\includegraphics[width=1in,height=1in]{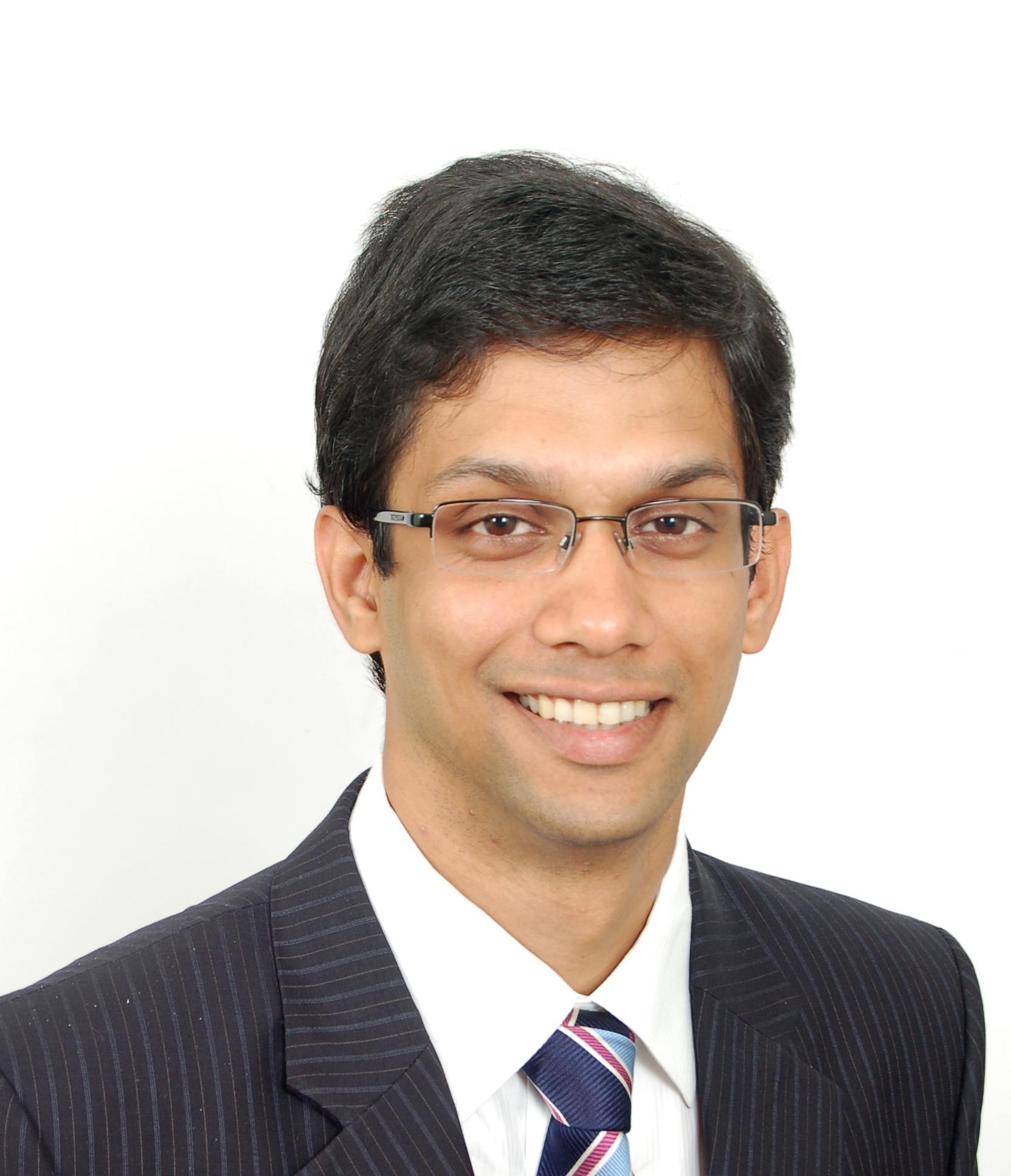}}]{Radha Krishna Ganti (S'0 - M'10)} is an Assistant Professor at the Indian Institute of Technology Madras, Chennai, India. He was a Postdoctoral researcher in the Wireless Networking and Communications Group at UT Austin from 2009-11. He received his B. Tech. and M. Tech. in EE from the Indian Institute of Technology, Madras, and a Masters in Applied Mathematics and a Ph.D. in EE from the University of Notre Dame in 2009. His doctoral work focused on the spatial analysis of interference networks using tools from stochastic geometry. He is a co-author of the monograph Interference in Large Wireless Networks (NOW Publishers, 2008). He received the 2014 IEEE Stephen O. Rice Prize,  and the 2014 IEEE Leonard G. Abraham Prize and the 2015  IEEE Communications society  young author best paper award.
\end{IEEEbiography}

\begin{IEEEbiography}[{\includegraphics[width=1in,height=1in]{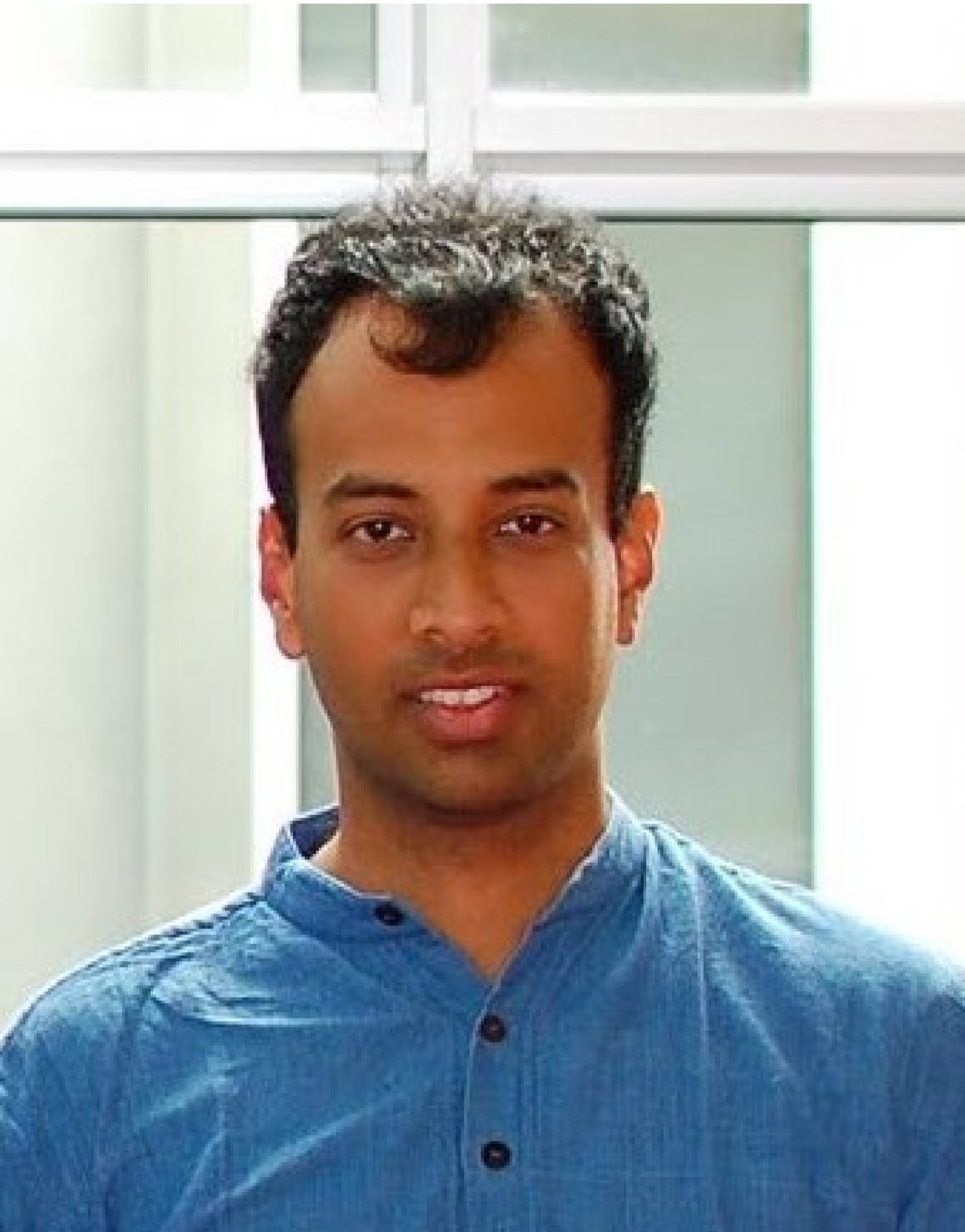}}]{Krishna Jagannathan} obtained his B. Tech. in Electrical Engineering from IIT Madras in 2004, and the S.M. and Ph.D. degrees in Electrical Engineering and Computer Science from Massachusetts Institute of Technology (MIT) in 2006 and 2010 respectively. During 2010-2011, he was a visiting post-doctoral scholar in Computing and Mathematical Sciences at Caltech, and an off-campus post-doctoral fellow at MIT. Since November 2011, he has been an assistant professor in the Department of Electrical Engineering, IIT Madras. He worked as a consultant at the Mathematical Sciences Research Center, Murray Hills, NJ in 2005, an engineering intern at Qualcomm, Campbell, CA in 2007. His research interests lie in the stochastic modeling and analysis of communication networks, transportation networks, network control, and queueing theory.
\end{IEEEbiography}


\end{document}